\newtheorem{thm}{Theorem}[section]
\newtheorem{Cor}{Corollary}
\newtheorem{proposition}[thm]{Proposition}
\newtheorem{remark}{Remark}
\newcommand{\cV}{\ensuremath{\mathcal V}}
 \let\b=\beta   
\let\d=\delta  
 \let\g=\gamma       \let\l=\lambda
\let\r=\rho  \let\s=\sigma    
   \let\G=\Gamma
\def\bR{{\mathbb R}}
\def\bC{{\mathbb C}}
\newcommand*{\Nbas}{N_b}
\newcommand*{\HermitanMat}{\mathbb{H}}
\newcommand*{\oneH}{\mathcal{H}}				%one-particle Hilbert space
\newcommand*{\NTdensMat}{\mathcal{P}_{N,\pm}}		%set of all finite T density-matrix operators, w > 0
\newcommand*{\NdensMat}{\overline{\mathcal{P}}_{N,\pm}}	%set of all density-matrix operators
\newcommand*{\ToneMat}{\mathscr{N}_{N,\pm}}
\newcommand*{\NoneMat}{\overline{\mathscr{N}}_{N,\pm}}
\newcommand*{\VoneMat}{\mathscr{V}_{N,\pm}}			
\newcommand*{\Helmholtz}{\Omega}
\newcommand*{\HelmholtzV}{\Omega_{N,\pm}}
\newcommand*{\uniF}{F_{N,\pm}}
\newcommand*{\crea}[1]{\hat{#1}^{\dagger}}
\newcommand*{\anni}[1]{\hat{#1}^{\vphantom{\dagger}}}
\DeclareMathOperator{\sgn}{sgn}
\DeclareMathOperator{\trace}{tr}
\DeclareMathOperator{\Trace}{Tr}
\DeclarePairedDelimiter{\abs}{\lvert}{\rvert}
\DeclarePairedDelimiter{\norm}{\lVert}{\rVert}
\newcommand*{\isDefinedAs}{\coloneqq}
\newcommand*{\ud}{\mathrm{d}}
\begin{document}

\title{One-body reduced density-matrix functional theory for the canonical ensemble}
\date{\today}

\author{S.M. Sutter}
\email{s.m.sutter@vu.nl}
\author{K.J.H. Giesbertz}
\email{k.j.h.giesbertz@vu.nl}
\affiliation{Theoretical Chemistry, Faculty of Exact Sciences, VU University, De Boelelaan 1083, 1081 HV Amsterdam, The Netherlands}

\begin{abstract}
We establish one-body reduced density-matrix functional theory for the canonical ensemble in a finite basis set at an elevated temperature. Including temperature guarantees differentiability of the universal functional by occupying all states and additionally not fully occupying the states in a fermionic system. We use convexity of the universal functional and invertibility of the potential-to-1RDM map to show that the subgradient contains only one element which is equivalent to differentiability.
This allows us to show that all 1RDMs with a purely fractional occupation number spectrum ($0 < n_i < 1 \; \forall_i$) are uniquely $v$-representable up to a constant.
\end{abstract}

\maketitle

\section{Introduction}
Quantum chemistry and physics deal with the description of many interacting particles. Often we limit ourselves to a single particle species. In quantum chemistry these are usually electrons, but in physics also bosonic particles are of interest. Though the many-body Schrödinger equation involves only linear operators, the daunting dimensionality of the many-body wave function renders a direct solution intractable, but for a few particles. This is one of the prime reasons to aim directly for reduced quantities.

In 1964 Hohenberg and Kohn presented their revolutionary work about density functional theory (DFT)~\cite{hohenberg1964inhomogeneous}. They showed that any observable can be regarded as a functional of the density. Especially the Kohn--Sham (KS) formulation~\cite{kohn-sham1965} has been important to the success of DFT. Their idea was to approximate the true kinetic energy by the kinetic energy of the KS system: a non-interacting system with the same density as the interacting system.  
The KS kinetic energy turns out to be a decent approximation to the true kinetic energy of the interacting system. The difference in the kinetic energy is then lumped together with the interaction beyond Hartree (classical Coulomb) in the exchange-correlation energy functional. Though formally exact, in practice KS-DFT has some weaknesses, since the exchange-correlation functional needs to be approximated. A famous example is the stretching of the H\textsubscript{2} bond~\cite{cohen2012challenges, becke2014j, vuckovic2015hydrogen}.

One way to bypass some of these problems in constructing an approximate exchange-correlation energy functional is (one-body) reduced density matrix (1RDM) functional theory. One advantage over DFT is that we have also an explicit expression for the kinetic energy while still having the total energy as a functional of the 1RDM~\cite{gilbert1975hohenberg}. However, in the zero temperature setting, mapping back from 1RDMs to (non-local) potentials is problematic, as already noted by Gilbert~\cite{gilbert1975hohenberg} and others~\cite{pernal2005, Leeuwen2007, PhD-Baldsiefen2012, giesbertz2019one}. This is most clear in the case of non-interacting particles, since typically ground state 1RDMs are idempotent. It therefore seems that non-idempotent 1RDMs cannot be $v$-representable in the absence of interactions. There is the possibility for orbital energies to be degenerate, however, which allows fractionally occupied orbitals and hence non-idempotent 1RDMs~\cite{RequistPankratov2008, GiesbertzBaerends2010}. But one quickly realizes that the scaled identity operator is the one-body Hamiltonian which has all 1RDMs as ground state 1RDM. It is clear that the interaction should play a crucial role in the back mapping, but there has been no progress in this direction.

An alternative to regularize the theory is to introduce entropy, i.e., work at finite temperature, as proposed more than a decade ago~\cite{Leeuwen2007, PhD-Baldsiefen2012, baldsiefen2013minimization, baldsiefen2015reduced}
Though this is a theoretical motivation to introduce temperature, also physically this is a well justified choice, since most experiments are conducted at $T > 0$. Important examples where temperature plays an important role are metal-insulator transitions in transition metal oxides~\cite{YooMaddoxKlepeis2005}, high $T_c$ super conductors~\cite{NagamatsuNakagawaMuranaka2001}, hot plasmas~\cite{Dharma-wardanaPerrot1982}, etc.

In Ref.~\cite{giesbertz2019one} 1RDM functional theory (1RDMFT) was presented for the grand canonical ensemble within a finite basis set. However, the use of a grand canonical ensemble is inappropriate if the number of particles is relatively low as in ultra cold atom experiments~\cite{EwaldFeldkerHirzler2019}, but also in the low temperature limit the grand canonical ensemble can lead to unphysical results~\cite{Bedingham2003, MullinFernandez2003}. A canonical formulation of 1RDMFT is therefore desirable and will be the goal of this article.

In classical thermodynamics the grand potential can be reached by a Legendre transformation of the Helmholtz free energy with respect to the number of particles. In the quantum mechanical setting we can not do that. The reason is that the grand potential and the Helmholtz functional act on different spaces: the Fock space and the $N$-particle Hilbert spaces, respectively. Thus, we can not simply transform it back to obtain the canonical case.
Another major difference w.r.t.\ the grand canonical ensemble is that in the non-interacting case the occupation numbers are not explicitly given by either the Fermi/Bose function for fermions/bosons. Instead, they need to be calculated recursively, using auxiliary partition functions~\cite{barghathi2020theory}.
In Ref.~\cite{baldsiefen2015reduced} it therefore remained an open question whether every thermal 1RDM (only fractional occupation numbers) would be non-interacting $v$-representable and not much progress could be made.
However, we do not rely on such an explicit relation and we are able to prove a one-to-one correspondence between thermal 1RDMs (all occupation numbers fractional) and (non-local) potentials for any interaction. The non-interacting system is just a particular case.
This result justifies the existence of an algorithm which finds for any thermal 1RDM the corresponding non-interacting Hamiltonian as published recently by Kooi~\cite{kooi2022canonical}.

In this work we present 1RDMFT in a rigorous way for a fixed number of particles, finite basis set and elevated temperature. We show that the universal functional is differentiable and it holds
\begin{align}
\label{differentiability of universal functional}
    \frac{\partial \uniF}{\partial \g}=-v,
\end{align}
where $\g$ is the ground state 1RDM for the potential $v$. Here and in the following $+$ and $-$ stand for the bosonic and the fermionic case respectively. If we have a handy expression for $\uniF[\g]$ then we can circumvent the handling of the density-matrix operator to compute the free energy and instead we only need to deal with the reduced quantity $\g$. The minimizer for the Helmholtz functional can then be determined through the above relation \eqref{differentiability of universal functional}.

This work is built up in the following way. In Section~\ref{section setting} we introduce all the relevant spaces, then, in Section~\ref{section general approach}, we present the Helmholtz functional, its minimizing density-matrix operator and the general approach for 1RDMFT. For this task we make use of the universal functional $\uniF$. To show differentiability of $\uniF$ we utilize results from convex analysis. In Section~\ref{section general properties of the helmholtz functional and implications on the universal functional} we show that all the relevant functionals are convex. Additionally, we show that two potentials differing by more than a constant can not generate the same density-matrix operator. The proof of differentiability of $\uniF$ is finalized in Section~\ref{section final result}. 

\section{Setting}
\label{section setting}
We build our $N$-particle space from a finite number of single particle states $\ket{i}$, for $i \in \{1, \dotsc , \Nbas\}$ and $\Nbas < \infty$. We require the states to be orthonormal. The one-particle Hilbert space $\oneH$ is now the $\bC$-vector space generated by the states $\ket{i}$, i.e., $\oneH\isDefinedAs \mathrm{span}\{\ket{1}, \dotsc , \ket{\Nbas}\} \cong \bC^{\Nbas}$. To build the $N$-particle space we need to distinguish between bosons and fermions.
\begin{description}
\item[Bosons] A system with $N$ bosons is described by a symmetric wave function. Therefore, the bosonic $N$-particle Hilbert state, denoted by $\oneH^N_+$, consists of all symmetric tensors of order $N$, i.e.,  $\oneH_+^N\isDefinedAs\mathrm{Sym}^N(\oneH)$. The dimension of $\oneH_+^N$ is $\binom{\Nbas+N-1}{N}$.

\item[Fermions] Fermionic systems are described by anti-symmetric wave functions. Thus, we consider the space of anti-symmetric tensors of order $N$, i.e.\ $\oneH_-^N\isDefinedAs\wedge ^N \oneH$. The dimension is given by $\binom{\Nbas}{N}$. Note that we need to have $\Nbas \geq N$. The case $\Nbas=N$ is trivial since we have only one possible state. Thus, we will only consider $\Nbas>N$. 
\end{description}
The set of density-matrix operators on the $N$-particle space $\oneH_{\pm}^N$ is defined as 
\begin{equation}
    \label{definition set of density-matrix operators}
        \NdensMat \isDefinedAs \Set{\!\hat{\r}:\oneH_\pm^N\! \to \oneH_\pm^N | \hat{\r}=\hat{\r}^\dagger, \hat{\r}\geq 0, \Trace\{\hat{\r}\}=1\!\!},
\end{equation}
which we endow with the norm 
\begin{equation}
    \label{defintion norm for density-matrix operators}
        \norm{ \hat{\r} }_2= \big( \Trace\{\abs{\hat{\r}}^2\} \big)^{1/2}.
\end{equation}
A density-matrix operator $\hat{\r} \in \NdensMat$ has a spectral decomposition 
\begin{equation}
    \label{density operator in spectral decomposition}
        \hat{\r}= \sum \limits_l \l_l \ket{\psi_l}\bra{\psi_l},
\end{equation}
and its kernel is given by 
\begin{multline}
    \label{kernel of density-matrix operator}
        \r(x_1, \ldots, x_N; y_1, \ldots, y_N) \\
        = \sum \limits_l \l_l \psi_l(x_1, \ldots , x_N) \psi_l^*(y_1, \ldots, y_N).
\end{multline}
We can define the 1RDM $\g$ by \footnote{Creation and annihilation operator are only defined pairwise for a $N$ particle space.}
\begin{align*}
    \g_{ij}[\hat{\r}]=\Trace\{\hat{\r}\,\crea{a}_j \anni{a}_i\}. 
\end{align*}
It turns out (see Appendix) that the relevant spaces for the 1RDMs are subsets of the space of all Hermitian $\Nbas \times \Nbas$ matrices denoted by $\HermitanMat(\Nbas),$
\begin{subequations}
\begin{align}
    \label{definition spaces for 1RDM (bosons)}
        \overline{\mathscr{N}}_{N,+}\!&\isDefinedAs\! \Set{\! \g \in \HermitanMat(\Nbas) |\! \g \geq 0, \trace\{\g\}=N}, \\
        \label{definition spaces for 1RDM (fermions)}
        \overline{\mathscr{N}}_{N,-}\!&\isDefinedAs\! \Set{\! \g \in \HermitanMat(\Nbas) |\! \g \geq 0, \g^2\! \leq \g, \trace\{\g\}=N \!\!}.
\end{align}
\end{subequations}
We have used $\trace\{\cdot \}$ to emphasize that the trace is over the one-particle Hilbert space $\oneH$ as opposed to the $\Trace\{\cdot\}$ which is over a $\oneH^N_\pm$ Hilbert space.
By convention, the eigenvalues and eigenstates of the 1RDM $\g$ are called natural occupation numbers and natural orbitals (NO) respectively. Coleman has shown that all elements of $\NoneMat$ can be obtained from a density-matrix operator in $\NdensMat$, so it is a true 1RDM~\cite{coleman1963structure}.

\begin{thm}[Coleman]
\label{thm N representability of 1RDMs}
For any $\g \in \NoneMat$ there is a density matrix $\hat{\r} \in \NdensMat$ which generates $\g$.
\end{thm}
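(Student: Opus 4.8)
The plan is to realize $\g$ as the 1RDM of an explicit convex mixture of occupation-number eigenstates, after first rotating into the natural-orbital basis. Since $\g\in\NoneMat$ is Hermitian it diagonalizes as $\g = U\,\mathrm{diag}(n_1,\dots,n_{\Nbas})\,U^\dagger$ for some unitary $U$ on $\oneH$, where the $n_i$ are the natural occupation numbers. Lifting $U$ to the second-quantized unitary $\G(U)=U^{\otimes N}|_{\oneH^N_\pm}$ on the $N$-particle space, and using that $\crea{a}$ and $\anni{a}$ transform linearly under a one-particle basis change, one checks the covariance $\g[\G(U)\,\hat{\r}\,\G(U)^\dagger]=U\,\g[\hat{\r}]\,U^\dagger$. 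Hence it suffices to realize a \emph{diagonal} target with entries $n_1,\dots,n_{\Nbas}$ by some $\hat{\r}\in\NdensMat$ and then conjugate back by $\G(U)$.

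The defining inequalities of $\NoneMat$ translate directly into conditions on this spectrum: $\g\ge 0$ gives $n_i\ge 0$, $\trace\{\g\}=N$ gives $\sum_i n_i=N$, and in the fermionic case $\g^2\le\g$ gives in addition $n_i\le 1$. Thus the occupation vector $n=(n_1,\dots,n_{\Nbas})$ lies in the scaled simplex $\{x\ge 0:\sum_i x_i=N\}$ for bosons and in the hypersimplex $\{x\in[0,1]^{\Nbas}:\sum_i x_i=N\}$ for fermions. It then remains to express $n$ as a convex combination of occupation vectors of pure configurations, since a convex combination of density-matrix operators is again in $\NdensMat$ and the 1RDM map is linear.

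For bosons the configuration $\ket{\Phi_i}$ with all $N$ particles in orbital $i$ has 1RDM $N\ket{i}\bra{i}$, so $\hat{\r}=\sum_i (n_i/N)\,\ket{\Phi_i}\bra{\Phi_i}$ manifestly lies in $\NdensMat$ and has the desired diagonal 1RDM. For fermions the Slater determinant $\ket{S}$ built from a subset $S\subseteq\{1,\dots,\Nbas\}$ with $|S|=N$ has diagonal 1RDM equal to the indicator vector $\mathbf 1_S$; writing $n=\sum_k p_k\,\mathbf 1_{S_k}$ with $p_k\ge 0$ and $\sum_k p_k=1$ then yields $\hat{\r}=\sum_k p_k\ket{S_k}\bra{S_k}\in\NdensMat$ with 1RDM $n$. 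The one genuinely non-trivial ingredient is therefore the combinatorial geometry of the hypersimplex: that its extreme points are exactly the $\binom{\Nbas}{N}$ vectors $\mathbf 1_S$, so that every point is such a convex combination by Carath\'eodory's theorem. I would establish this by an active-constraint count — at an extreme point, $\Nbas-1$ of the box constraints $x_i=0$ and $x_i=1$ must be tight and linearly independent alongside the equality $\sum_i x_i=N$, which forces at most one coordinate to be strictly fractional, and integrality of $N=\sum_i x_i$ then rules that last one out as well. This vertex characterization (equivalently a Birkhoff--von Neumann-type argument exploiting total unimodularity of the constraint matrix) is where the real work lies; the reduction to the diagonal case and the bosonic construction are essentially bookkeeping.
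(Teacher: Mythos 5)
Your proposal is correct, and it overlaps with the paper's proof in one half while diverging in the other. The fermionic argument is essentially the paper's: the paper also reduces to the natural-orbital basis, identifies the occupation vector $\mathbf{n}=(\l_1,\dotsc,\l_{\Nbas})$ with a point of the polytope whose extreme points are the indicator vectors $\mathbf{e}_{i_1}+\dotsb+\mathbf{e}_{i_N}$, and maps those vertices to Slater-determinant projectors $\ket{\varphi_{i_1}\dotsc\varphi_{i_N}}\bra{\varphi_{i_1}\dotsc\varphi_{i_N}}$, using linearity of $\hat{\r}\to\g$. What you add is the actual vertex characterization of the hypersimplex (the active-constraint count plus the integrality argument), which the paper simply asserts by \emph{defining} the polytope as the convex hull of those vertices and stating that $\mathbf{n}$ lies in it; your proposal therefore fills a genuine, if standard, gap in the paper's exposition. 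The bosonic half is where you take a genuinely different route: the paper constructs a single \emph{pure} state
\begin{equation*}
\psi(x_1,\dotsc,x_N)=\frac{1}{\sqrt{N}}\sum_{j=1}^{\Nbas}\l_j^{1/2}\prod_{i=1}^{N}\varphi_j(x_i),
\end{equation*}
whose cross terms vanish upon partial integration (this needs $N\geq 2$, forcing the paper to treat $N=1$ separately), whereas you use the incoherent mixture $\hat{\r}=\sum_i(\l_i/N)\ket{\Phi_i}\bra{\Phi_i}$ of condensates. Your construction is simpler, works uniformly for all $N\geq 1$, and is structurally parallel to the fermionic case; the paper's buys a strictly stronger conclusion, namely that every bosonic $\g\in\overline{\mathscr{N}}_{N,+}$ is even \emph{pure-state} $N$-representable, not just ensemble-representable. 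Your explicit covariance step $\g[\G(U)\,\hat{\r}\,\G(U)^\dagger]=U\,\g[\hat{\r}]\,U^\dagger$ is implicit in the paper (which simply works in the NO basis throughout) and is a legitimate way to make the reduction precise.
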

The proof can be found in the Appendix.

\section{General Approach}
\label{section general approach}
The Helmholtz functional for the canonical ensemble is defined as
\begin{equation}
    \label{definition Helmholtz functional}
        \Helmholtz_v[\hat{\rho}]\isDefinedAs E_v[\hat{\r}]-\b^{-1}S[\hat{\r}],
\end{equation}
where
\begin{equation}
    \label{energy functional}
        E_v[\hat{\r}]\isDefinedAs\Trace[\hat{\r}\hat H_v]
\end{equation}
is the energy of a system with Hamiltonian $\hat{H}_v\isDefinedAs\hat{H}_0+\hat{V}_v$ ($\hat H_0$ contains the kinetic and interaction part and $\hat V_v$ is the potential with kernel $v(x,x')$). The second term contains the entropy 
\begin{equation}
    \label{definition entropy}
        S[\hat{\r}]\isDefinedAs -\Trace\{\hat{\r}\log (\hat{\r})\}
\end{equation}
and the inverse temperature $\b=1/T$. With $\log$ we mean the natural logarithm.
The minimizer $\hat{\r}_v$ of the Helmholtz functional can be found by variations in the density-matrix operator which yields the equation
\begin{equation}
    \label{functional derivative of Helmholtz functional}
        \Trace\{\d \hat{\r}(\hat{H}_v+\b^{-1}\log(\hat{\r}_v))\}+\b^{-1}\Trace\big\{\d \hat{\r}\big\}=0.
\end{equation}
From the unit trace condition and \eqref{functional derivative of Helmholtz functional} it follows that
\begin{align}
    \label{minimizer of the Helmholtz functional}
        \hat{\r}_v &= e^{-\b\hat{H}_v}/Z[v], &
        &\text{where}& 
        Z[v] &\isDefinedAs \Trace\bigl\{e^{-\b \hat{H}_v}\bigr\}.
\end{align}
The minimizer $\hat{\r}_v$ is called Gibbs state.
Note that we only have a proper solution for $0<Z[v]<\infty$. This is always the case since we work in a finite basis setting with a fixed number of particles, so the trace only runs over a finite number of elements.

One aim is to show that the map from the potential $v$ to the density-matrix operator $\hat{\r}_v$ is invertible. However, this is only doable up to a constant since adding a constant to the potential does not change the density-matrix operator. To achieve a one-to-one correspondence we allow only potentials from the following set,
\begin{equation}
    \label{set of all potentials}
        \cV\isDefinedAs \Set{v \in \HermitanMat(\Nbas) | \trace\{v\}=0}.
\end{equation}
We can also think of $v\in \cV$ being a representative of the equivalence class containing potentials differing by a constant.

In Theorem~\ref{thm N representability of 1RDMs} we have seen that all $\g \in \NoneMat$ are $N$-representable. However, physically relevant are only the 1RDMs that are associated with a Gibbs state $\hat{\r}_v$. Thus, we denote the set of all $v$-representable 1RDM by
\begin{equation}
    \label{definition set of v-representable 1RDMs}
        \VoneMat \isDefinedAs \Set{ \g \in \NoneMat | \exists \; v \in \cV \mapsto \g}.
\end{equation}
The approach is to partition the minimization in the Helmholtz functional as 
\begin{equation}
    \label{infimum of the Helmholtz funcitonal over 1RDM }
        \HelmholtzV[v] \isDefinedAs \inf_{\mathclap{\hat{\r} \in \NdensMat}} \Helmholtz_v[\hat{\r}] 
        = \inf_{\mathclap{\g \in \NoneMat}} \big(\uniF[\g]+\trace\{v\g\}\big)
\end{equation}
where 
\begin{align}
    \label{defintion universal functional}
        \uniF[\g] \isDefinedAs{}& \inf_{\substack{\hat{\r}\in \NdensMat\\\hat{\r}\to \g}}\Helmholtz_0[\hat{\r}] \notag \\*
        {}={}&\inf_{\substack{\hat{\r}\in \NdensMat\\\hat{\r}\to \g}}\Trace\big\{\hat \r\big(\hat H_0 +\b^{-1}\log(\hat{\r})\big)\big\}
\end{align}
is called the universal functional which takes the value $\infty$ in case no $\hat{\r} \to \g$ exists. Here and in the following, $\Helmholtz_0[\hat{\r}]=\Helmholtz_{v=0}[\hat{\r}]$. The aim is to show that $\uniF$ is differentiable. Then the minimizer can be found through the relation 
\begin{equation*}
    \frac{\partial \uniF}{\partial \g}=-v
\end{equation*}
and we know that $\g$ is a canonical eq-1RDM (equilibrium 1-RDM) which was an open question in \cite{baldsiefen2015reduced}.
\section{General Properties of the Helmholtz Functional and Implications on the Universal Functional}
\label{section general properties of the helmholtz functional and implications on the universal functional}
\begin{thm}
\label{thm invertibility hamiltonian to density matrix}
The mapping $\hat{H}_v \mapsto \hat{\r}_v$ with $v\in \cV$ is invertible up to a constant in the Hamiltonian.
\end{thm}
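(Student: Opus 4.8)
The plan is to exploit that in our finite-dimensional setting the Gibbs state $\hat{\r}_v = e^{-\b\hat H_v}/Z[v]$ is a strictly positive operator on $\oneH_\pm^N$, so that the matrix logarithm is well defined and single-valued on the cone of positive-definite Hermitian operators. Taking the logarithm of the defining relation \eqref{minimizer of the Helmholtz functional} gives
\begin{equation*}
    \log(\hat{\r}_v) = -\b\,\hat H_v - \log(Z[v])\,\mathds{1},
\end{equation*}
which already isolates $\hat H_v$ in terms of $\hat{\r}_v$ up to the scalar $\b^{-1}\log Z[v]$.

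First I would make the positivity precise: since $\hat H_v$ is Hermitian on the finite-dimensional space $\oneH_\pm^N$, the operator $e^{-\b\hat H_v}$ has only strictly positive eigenvalues, hence so does $\hat{\r}_v$, and the real Hermitian logarithm is uniquely determined by $\hat{\r}_v$. Suppose now that two Hamiltonians $\hat H_{v_1},\hat H_{v_2}$ of the admissible form $\hat H_0+\hat V_v$ produce the same Gibbs state, $\hat{\r}_{v_1}=\hat{\r}_{v_2}$. Equating the two logarithms and cancelling the common $\hat H_0$ yields
\begin{equation*}
    \hat V_{v_1} - \hat V_{v_2} = \b^{-1}\bigl(\log Z[v_2]-\log Z[v_1]\bigr)\,\mathds{1},
\end{equation*}
so the two Hamiltonians differ only by a constant multiple of the identity. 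This is the immediate content of the theorem: the map $\hat H_v\mapsto\hat{\r}_v$ is invertible up to a constant.

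To upgrade this into the companion statement that two potentials differing by more than a constant cannot generate the same $\hat{\r}_v$, I would next read off what a scalar difference on $\oneH_\pm^N$ means for the one-body operator $\hat V_{v_1-v_2}=\sum_{ij}(v_1-v_2)_{ij}\,\crea{a}_i\anni{a}_j$. Diagonalising $w\isDefinedAs v_1-v_2$ with eigenvalues $\m_1,\dotsc,\m_{\Nbas}$, the operator $\hat V_w$ acts on an occupation-number eigenstate $\ket{n_1,\dotsc,n_{\Nbas}}$ (with $\sum_k n_k=N$) through the eigenvalue $\sum_k \m_k n_k$. Requiring this to equal a single constant $c$ for every admissible configuration forces all $\m_k$ to coincide: for bosons one compares the configurations placing all $N$ particles in a single mode, while for fermions one swaps an occupied mode for an empty one, which is possible precisely because $\Nbas>N$. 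Hence $w$ is a scalar matrix, and since $w\in\cV$ is traceless it must vanish, giving $v_1=v_2$.

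The logarithm step is essentially free in finite dimensions, so the part that needs genuine care --- and the main obstacle --- is this last reduction: showing that a one-body operator acting as a constant on the entire $N$-particle sector must already be a scalar on the one-particle space $\oneH$. The fermionic case is the delicate one, since only $\binom{\Nbas}{N}$ occupation patterns are available and the swapping argument relies on the standing assumption $\Nbas>N$ to have an empty mode to exchange into; the degenerate case $\Nbas=N$ (excluded in the setting) would indeed fail, mirroring the non-$v$-representability pathologies noted in the introduction.
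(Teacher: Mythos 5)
Your proof of the theorem itself is correct and is essentially the paper's own argument: the paper extracts $\b^{-1}\log(\hat{\r}_v)+\hat{H}_v = C$ from the stationarity condition \eqref{functional derivative of Helmholtz functional} and subtracts, while you take the matrix logarithm of the explicit Gibbs state \eqref{minimizer of the Helmholtz functional}; these are the same step in two guises, and both yield $\hat{H}_{v_1}-\hat{H}_{v_2}=c\,\mathds{1}$ directly.

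Where you go beyond the paper is in your final reduction. The paper stops at the operator identity on $\oneH_\pm^N$ and then states Corollary~\ref{cor v to density matrix is invertible} without proof (adding only the remark that the constant may be of the form $f(\hat{N})$). You actually close that gap: diagonalizing $w=v_1-v_2$ and evaluating $\hat V_w$ on occupation-number states, you show that a one-body operator acting as a scalar on the whole $N$-particle sector must be a scalar matrix on $\oneH$ --- comparing single-mode condensates for bosons, and swapping an occupied mode for an empty one for fermions --- so that tracelessness of $w\in\cV$ forces $w=0$. This is precisely the step the paper leaves implicit, and your observation that the fermionic swap argument needs the standing assumption $\Nbas>N$ (and genuinely fails for $\Nbas=N$, where every one-body potential is scalar on the one-dimensional $N$-particle space) is a correct and worthwhile sharpening. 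In short: same mechanism for the theorem, plus a rigorous proof of the corollary that the paper only asserts.
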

\begin{proof}
Assume that two Hamiltonians $\hat{H}_v$ and $\hat{H}_v'$ differing in their potential yield the same density-matrix operator $\hat{\r}_v$. From~\eqref{functional derivative of Helmholtz functional} it follows that $\hat{\r}_v$ fulfills
\begin{align*}
    \frac{1}{\b}\log(\hat{\r}_v)+\hat{H}_v&=C,\\
    \frac{1}{\b}\log(\hat{\r}_v)+\hat{H}_v'&=C'.
\end{align*}
Subtracting these equations gives \(\hat{H}_v-\hat{H}_v' = C-C' \).
\end{proof}

\begin{remark}
Since we have a fixed number of particles in the Hilbert space, the constant in Theorem~\ref{thm invertibility hamiltonian to density matrix} can be of the form $ f(\hat{N})$ where $f: \bR \to \bR$, so this includes the arbitrary constant shift in the potential.
\end{remark}

\begin{Cor}
\label{cor v to density matrix is invertible}
The map $v \mapsto \hat{\r}_v$ with $v \in \cV$ is invertible.
\end{Cor}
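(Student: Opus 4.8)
The statement to prove is that $v\mapsto\hat{\r}_v$ is injective on $\cV$; since adding a constant to the potential has already been factored out by the trace-zero condition defining $\cV$, this is exactly what ``invertible'' means here. The plan is to reduce the claim to Theorem~\ref{thm invertibility hamiltonian to density matrix} and then to show that the only traceless one-body potential whose second-quantized operator acts as a constant on $\oneH_\pm^N$ is the zero potential. Concretely, suppose $v,v'\in\cV$ both generate the same Gibbs state, $\hat{\r}_v=\hat{\r}_{v'}$. Since $\hat H_v=\hat H_0+\hat V_v$ and $\hat H_{v'}=\hat H_0+\hat V_{v'}$ share the same $\hat H_0$, Theorem~\ref{thm invertibility hamiltonian to density matrix} gives $\hat H_v-\hat H_{v'}=\hat V_v-\hat V_{v'}=C$ for some constant $C$ (the preceding remark guarantees that on a fixed particle number this constant is simply a scalar). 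Writing $w\isDefinedAs v-v'\in\HermitanMat(\Nbas)$, the task becomes to deduce $w=0$ from the operator identity $\sum_{ij}w_{ij}\,\crea{a}_i\anni{a}_j=C$ on $\oneH_\pm^N$, knowing in addition that $\trace\{w\}=0$ because $v,v'\in\cV$.

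To carry this out I would diagonalize the Hermitian matrix $w=\sum_k\mu_k\ket{\chi_k}\bra{\chi_k}$ and pass to the corresponding mode operators, so that the one-body operator becomes $\sum_k\mu_k\,\crea{b}_k\anni{b}_k=\sum_k\mu_k\hat n_k$ with $\hat n_k$ the occupation-number operator of the mode $\ket{\chi_k}$. Evaluating on an occupation-number eigenstate with occupations $(N_1,\dots,N_{\Nbas})$, $\sum_k N_k=N$, gives the eigenvalue $\sum_k\mu_k N_k$, which must equal $C$ for every admissible configuration. In the fermionic case ($0<N<\Nbas$) I compare two configurations differing only by moving a single particle between modes $i$ and $j$; equality of the two eigenvalues forces $\mu_i=\mu_j$, and since $i,j$ are arbitrary all $\mu_k$ coincide. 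In the bosonic case ($N\ge1$) placing all $N$ particles in one mode $k$ gives $N\mu_k=C$, again forcing all $\mu_k$ equal. In both cases $\mu_k\equiv\mu$, so $C=N\mu$; combined with $0=\trace\{w\}=\Nbas\,\mu$ this yields $\mu=0$, hence $w=0$ and $v=v'$.

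The one genuinely delicate point is the passage from the equality of Hamiltonians to a statement purely about the one-body matrix $w$: a priori Theorem~\ref{thm invertibility hamiltonian to density matrix}, read together with its remark, only tells us that $\hat V_v-\hat V_{v'}$ acts as some $f(\hat N)$, and it is the restriction to the fixed sector $\oneH_\pm^N$---where $\hat N\equiv N$ collapses $f(\hat N)$ to a scalar---that makes the number-operator bookkeeping above legitimate. The remaining ingredients, namely the existence of the comparison configurations, rely only on the standing assumptions $1\le N$ and, for fermions, $N<\Nbas$, both already built into the setting, so I expect no further obstacle.
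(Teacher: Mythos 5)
Your proof is correct, and its skeleton---reduce to Theorem~\ref{thm invertibility hamiltonian to density matrix}, then eliminate the residual constant using the gauge $\trace\{v\}=0$---is the same route the paper takes. The difference is one of completeness: the paper states the corollary with no proof at all, treating it as immediate, whereas you supply the one step that is genuinely not automatic. Theorem~\ref{thm invertibility hamiltonian to density matrix} only yields that $\hat V_v-\hat V_{v'}$ acts as a scalar $C$ on $\oneH_\pm^N$; to conclude $v=v'$ one must know that the second-quantization map $w\mapsto\sum_{ij}w_{ij}\crea{a}_i\anni{a}_j$, restricted to the $N$-particle sector, sends no nonzero traceless Hermitian matrix to a multiple of the identity. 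Your mode-occupation argument proves exactly this: diagonalizing $w$ and testing against occupation-number eigenstates forces all eigenvalues $\mu_k$ to coincide (particle-exchange between two modes for fermions, condensation into a single mode for bosons), after which $\trace\{w\}=\Nbas\mu=0$ gives $w=0$. You also correctly flag the two hypotheses this relies on: $N\geq 1$, and $N<\Nbas$ for fermions---the latter is the paper's standing assumption, and it is essential, since for $N=\Nbas$ the space $\wedge^N\oneH$ is one-dimensional, every traceless potential then acts as zero on it, and the corollary would be false. So your write-up is not merely consistent with the paper; it closes a small gap the paper leaves implicit.
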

Note that we only have a one-to-one correspondence because we require $\trace\{v\}=0$. Otherwise a constant shift in the potential would lead to the same density-matrix operator.

At this point we want to mention that the density-matrix operator $\hat{\r}_v$ is positive definite, $\hat{\r}_v >0$, and lies in the following subspace of $\NdensMat$,
\begin{equation}
    \label{space where statistical density operators lie}
        \NTdensMat\isDefinedAs \Set{ \hat{\r}: \oneH^N_\pm \to \oneH_\pm^N  | \hat{\r}=\hat{\r}^\dagger, \hat{\r}>0, \Trace\{\hat{\r}\}=1}.
\end{equation}
It follows that the natural occupation numbers $n_i$ are positive and in the fermionic case additionally $n_i<1$. To see this let $\phi_1, \dotsc , \phi_{\Nbas}$ be the NO basis and $\hat{\r}_v=\sum_j \l_j \ket{\psi_j}\bra{\psi_j}$ be the spectral decomposition of the density-matrix operator. Then, as the $\psi_j$'s build a basis of $\oneH_\pm^N$, each NO $\phi_i$ contributes to at least one of the eigenstates. So,
\begin{multline}
    \label{natural occupation number for statistical ensemble positiv}
        n_i= \sum \limits_j \l_j \int \ud x \ud y \ud x_2 \dotsi \ud x_N  \phi_i^*(x)\phi_i(y) \\
        \psi_j(x, x_2, \dotsc, x_N) \psi_j^*(y,x_2, \dotsc, x_N)>0,
\end{multline}
where we used the fact that all weights $\l_j=e^{-\b E_j}/Z$ are positive. In case of fermions we have already showed that $n_i \leq 1$. The $i$th NO can not be present in all $\psi_j$'s (in case $N \neq \Nbas)$, so
\begin{multline}
    \label{fermionic natural occupation number for statistical ensemble smaller than 1}
        n_i=\sum \limits_j \l_j \int \ud x \ud y \mathrm{x_2} \dotsi \ud x_N  \phi_i^*(x)\phi_i(y) \\
        \psi_j(x, x_2, \dotsc, x_N) \psi_j^*(y,x_2, \dotsc, x_N)<1, 
\end{multline}
because the integral is at least for one $j$ not equal to 1.
Summarized, the 1RDMs produced by a potential are contained in
\begin{align}
\mathscr{N}_{N,+}&\isDefinedAs \Set{ \g \in \HermitanMat(\Nbas) |  \g>0 }, \\
\mathscr{N}_{N,-}&\isDefinedAs \Set{ \g \in \HermitanMat(\Nbas) |  \g>0, \g^2<\g }.
\end{align}
We want to show that the most important functionals are either convex or concave to be able to use results from convex analysis. We start with the functional $\HelmholtzV[v]$ which is achieved through a minimization and thus turns out to be concave \cite{eschrig2010t}.
\begin{thm}
The functional $\HelmholtzV[v]= \min \limits_{\hat{\r} \in \NdensMat} \Helmholtz_v[\hat{\r}]=\min \limits_{\hat{\r} \in \NTdensMat} \Helmholtz_v[\hat{\r}]= -\b^{-1}\log\big(Z[v]\big)$ is strictly concave in $v$.
\end{thm}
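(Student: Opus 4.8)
The plan is to read off concavity directly from the variational definition \eqref{infimum of the Helmholtz funcitonal over 1RDM } and then upgrade it to strictness using uniqueness of the Gibbs minimizer together with the invertibility in Corollary~\ref{cor v to density matrix is invertible}. First I would observe that for every fixed $\hat{\r}\in\NdensMat$ the map $v\mapsto\Helmholtz_v[\hat{\r}]$ is affine: writing $\Helmholtz_v[\hat{\r}]=\Trace[\hat{\r}\hat H_0]+\Trace[\hat{\r}\hat V_v]+\b^{-1}\Trace\{\hat{\r}\log\hat{\r}\}$, only the middle term depends on $v$, and it does so linearly since $\hat V_v$ is linear in the kernel $v$. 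Hence $\HelmholtzV[v]=\inf_{\hat{\r}}\Helmholtz_v[\hat{\r}]$ is a pointwise infimum of affine functions of $v$, which is automatically concave.

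To obtain strictness, fix $v_0\neq v_1$ in $\cV$ and $t\in(0,1)$, and put $v_t=(1-t)v_0+tv_1$. Let $\hat{\r}_{v_t}$ be the minimizer at $v_t$, i.e.\ the Gibbs state \eqref{minimizer of the Helmholtz functional}. Using the affineness just noted,
\begin{align*}
\HelmholtzV[v_t]
&=\Helmholtz_{v_t}[\hat{\r}_{v_t}]
=(1-t)\,\Helmholtz_{v_0}[\hat{\r}_{v_t}]+t\,\Helmholtz_{v_1}[\hat{\r}_{v_t}]\\
&\geq(1-t)\,\HelmholtzV[v_0]+t\,\HelmholtzV[v_1],
\end{align*}
where the inequality uses $\Helmholtz_{v_i}[\hat{\r}_{v_t}]\geq\HelmholtzV[v_i]$. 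Because both weights are strictly positive, equality in the last line forces the two separate equalities $\Helmholtz_{v_0}[\hat{\r}_{v_t}]=\HelmholtzV[v_0]$ and $\Helmholtz_{v_1}[\hat{\r}_{v_t}]=\HelmholtzV[v_1]$; that is, $\hat{\r}_{v_t}$ is simultaneously a minimizer for $v_0$ and for $v_1$.

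The decisive step is uniqueness of the minimizer. The entropy contribution makes $\Helmholtz_v[\cdot]$ strictly convex on $\NdensMat$ (the map $\hat{\r}\mapsto\Trace\{\hat{\r}\log\hat{\r}\}$ is strictly convex and the energy is linear in $\hat{\r}$), so each infimum $\HelmholtzV[v_i]$ is attained at exactly one density-matrix operator, the Gibbs state. The two equalities then give $\hat{\r}_{v_0}=\hat{\r}_{v_t}=\hat{\r}_{v_1}$, and Corollary~\ref{cor v to density matrix is invertible} upgrades $\hat{\r}_{v_0}=\hat{\r}_{v_1}$ to $v_0=v_1$, contradicting $v_0\neq v_1$. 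Hence the inequality is strict for all $v_0\neq v_1$, which is strict concavity; the identity $\HelmholtzV[v]=-\b^{-1}\log Z[v]$ follows by inserting \eqref{minimizer of the Helmholtz functional} into $\Helmholtz_v$.

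I expect the main obstacle to be the uniqueness of the Gibbs minimizer, since the entire strictness argument is an equality-case analysis that only bites if the minimizer is genuinely unique; I would secure this from strict convexity of $\hat{\r}\mapsto\Trace\{\hat{\r}\log\hat{\r}\}$ on the convex set $\NdensMat$. A more computational alternative that avoids Corollary~\ref{cor v to density matrix is invertible} is to restrict to the line $v_0+tw$ with $w=v_1-v_0$, so that $\hat H_{v_0+tw}=\hat H_{v_0}+t\hat V_w$, and show that $t\mapsto-\b^{-1}\log\Trace\{e^{-\b(\hat H_{v_0}+t\hat V_w)}\}$ has strictly negative second derivative; this reduces to positivity of the Kubo--Mori variance of $\hat V_w$ in the Gibbs state, i.e.\ to the fact that $\hat V_w$ is not a multiple of the identity on $\oneH_\pm^N$ when $w\neq0$, which is exactly where $\trace\{w\}=0$ together with $\Nbas>N$ enters through a short occupation-number argument. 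I would favor the first route, as it reuses the invertibility result cleanly and sidesteps the Duhamel/Kubo--Mori computation.
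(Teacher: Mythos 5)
Your proposal is correct and follows essentially the same route as the paper: both exploit that $\Helmholtz_v[\hat{\r}]$ is affine in $v$ for fixed $\hat{\r}$, evaluate the split at the common minimizer for the interpolated potential, and obtain strictness from Corollary~\ref{cor v to density matrix is invertible}. If anything, your equality-case analysis spells out a step the paper leaves implicit, namely that strict convexity of $\Helmholtz_v[\cdot]$ in $\hat{\r}$ makes the Gibbs minimizer unique, which is what lets the invertibility of $v \mapsto \hat{\r}_v$ deliver the contradiction.
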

\begin{widetext}
\begin{proof}
Let $v_1 \neq v_2$ be two potentials in $\cV$ and let $0<t<1$. Then we have
\begin{align*}
    \HelmholtzV[tv_1+(1-t)v_2]%&= \min_{\hat{\r}\in \NTdensMat}\Trace\Big\{\hat{\r}\big(t\hat{H}_{v_1}+(1-t)\hat{H}_{v_2}+\frac{1}{\b}\log(\hat{\r})\big)\Big\}\\
    &=\min_{\hat{\r}\in \NTdensMat}\Trace\Big\{\hat{\r}\big(t\hat{H}_{v_1} + t \frac{1}{\b}\log(\hat{\r})\big)\Big\}+\Trace\Big\{\hat{\r}\big((1-t)\hat{H}_{v_2}+(1-t)\frac{1}{\b}\log(\hat{\r})\big)\Big\}\\
    &>t \min_{\hat{\r}_1 \in \NTdensMat} \Trace\Big\{\hat{\r}_1\big(\hat{H}_{v_1}+\frac{1}{\b}\log(\hat{\r}_1)\big)\Big\} + 
    (1-t) \min_{\hat{\r}_2 \in \NTdensMat}\Trace\Big\{\hat{\r}_2\big(\hat{H}_{v_2}+\frac{1}{\b}\log(\hat{\r}_2)\big)\Big\}\\
    &=t \HelmholtzV[v_1]+(1-t)\HelmholtzV[v_2]
\end{align*}
where the strict inequality follows from Corollary~\ref{cor v to density matrix is invertible}.
\end{proof}
\end{widetext}
With Corollary~\ref{cor v to density matrix is invertible} it is possible to show a generalization of the Hohenberg--Kohn theorem for 1RDMs and non-local potentials~\cite{mermin1965thermal}.
\begin{thm}
\label{Mermin's theorem for canonical ensemble}
The map $v \mapsto \g_v$ for $v \in \cV$ is invertible.
\end{thm}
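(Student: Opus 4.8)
The plan is to establish injectivity of $v \mapsto \g_v$ by a Mermin-type variational argument, using strict convexity of the Helmholtz functional together with Corollary~\ref{cor v to density matrix is invertible}. Suppose, for contradiction, that two distinct potentials $v_1 \neq v_2$ in $\cV$ produce the same 1RDM, $\g \isDefinedAs \g_{v_1} = \g_{v_2}$. The driving observation is that the potential is a one-body operator, so it couples to the density-matrix operator only through its 1RDM:
\begin{equation*}
\Trace\{\hat{\r}\hat{V}_v\} = \trace\{v\,\g[\hat{\r}]\}, \quad\text{so}\quad \Helmholtz_v[\hat{\r}] = \Helmholtz_0[\hat{\r}] + \trace\{v\,\g[\hat{\r}]\}.
\end{equation*}

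First I would record that the Gibbs state $\hat{\r}_v$ is the \emph{unique} minimizer of $\Helmholtz_v$ over $\NTdensMat$: the energy $E_v[\hat{\r}]$ is linear in $\hat{\r}$ while $-\b^{-1}S[\hat{\r}] = \b^{-1}\Trace\{\hat{\r}\log\hat{\r}\}$ is strictly convex, so $\Helmholtz_v$ is strictly convex and has a single minimizer, which by \eqref{minimizer of the Helmholtz functional} is $\hat{\r}_v$. By Corollary~\ref{cor v to density matrix is invertible}, $v_1 \neq v_2$ forces $\hat{\r}_{v_1} \neq \hat{\r}_{v_2}$, so inserting the other Gibbs state into each strictly convex functional yields the strict inequalities
\begin{equation*}
\Helmholtz_{v_1}[\hat{\r}_{v_1}] < \Helmholtz_{v_1}[\hat{\r}_{v_2}], \qquad \Helmholtz_{v_2}[\hat{\r}_{v_2}] < \Helmholtz_{v_2}[\hat{\r}_{v_1}].
\end{equation*}

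Next I would rewrite each right-hand side with the coupling identity above. Since $\g[\hat{\r}_{v_2}] = \g$, one has $\Helmholtz_{v_1}[\hat{\r}_{v_2}] = \Helmholtz_0[\hat{\r}_{v_2}] + \trace\{v_1\g\}$ and $\Helmholtz_{v_2}[\hat{\r}_{v_2}] = \Helmholtz_0[\hat{\r}_{v_2}] + \trace\{v_2\g\}$; eliminating $\Helmholtz_0[\hat{\r}_{v_2}]$ gives $\Helmholtz_{v_1}[\hat{\r}_{v_2}] = \Helmholtz_{v_2}[\hat{\r}_{v_2}] + \trace\{(v_1-v_2)\g\}$, and symmetrically $\Helmholtz_{v_2}[\hat{\r}_{v_1}] = \Helmholtz_{v_1}[\hat{\r}_{v_1}] + \trace\{(v_2-v_1)\g\}$ using $\g[\hat{\r}_{v_1}] = \g$. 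Adding the two strict inequalities then cancels the $\Helmholtz_0$ terms and, crucially, the two $\trace\{(v_1-v_2)\g\}$ contributions, which appear with opposite sign precisely because both states carry the \emph{same} 1RDM $\g$. This leaves $\Helmholtz_{v_1}[\hat{\r}_{v_1}] + \Helmholtz_{v_2}[\hat{\r}_{v_2}] < \Helmholtz_{v_1}[\hat{\r}_{v_1}] + \Helmholtz_{v_2}[\hat{\r}_{v_2}]$, a contradiction.

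The only genuinely delicate point is securing the \emph{strictness} of the variational inequalities, which is exactly where Corollary~\ref{cor v to density matrix is invertible} is indispensable: without knowing $\hat{\r}_{v_1} \neq \hat{\r}_{v_2}$ one could only assert $\leq$, and the summed inequalities would collapse to a harmless equality rather than a contradiction. Everything else is routine bookkeeping with the one-body coupling identity. I would close by noting that the contradiction shows $\g_{v_1} = \g_{v_2}$ implies $v_1 = v_2$ on $\cV$, i.e.\ the map $v \mapsto \g_v$ is injective and hence invertible onto its image $\VoneMat$.
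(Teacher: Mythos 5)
Your proof is correct and is essentially the paper's own argument: both rest on the one-body coupling identity $\Trace\{\hat{\r}\hat{V}_v\}=\trace\{v\,\g[\hat{\r}]\}$, uniqueness of the Gibbs-state minimizer (strict convexity of $\Helmholtz_v$), Corollary~\ref{cor v to density matrix is invertible} to guarantee $\hat{\r}_{v_1}\neq\hat{\r}_{v_2}$, and the Mermin-style swap-and-add contradiction in which the $\trace\{(v_1-v_2)\g\}$ terms cancel. The only difference is cosmetic bookkeeping in which functional the strict variational inequality is stated for first.
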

\begin{proof}
Assume there are two potentials $v_1\neq v_2 \in \cV$ yielding different density matrix operators $\hat{\r}_1 \neq \hat{\r}_2$ but the same 1RDM $\g$. Then we get
\begin{align*}
    \HelmholtzV[v_1]&=\Helmholtz_{v_1}[\hat{\r}_1]
    =\Helmholtz_{v_2}[\hat{\r}_1]+\trace\big\{\g(v_1-v_2)\big\} \\
    & > \Helmholtz_{v_2}[\hat{\r}_2]+\trace\big\{\g(v_1-v_2)\big\}\\
    &= \HelmholtzV[v_2]+\trace\big\{\g(v_1-v_2)\big\}.
\end{align*}
Changing the role of $v_1$ and $v_2$ and adding the two inequalities gives
\begin{align*}
    \HelmholtzV[v_1]+\HelmholtzV[v_2]<\HelmholtzV[v_2]+\HelmholtzV[v_1],
\end{align*}
which is a contradiction.
\end{proof}
One aim is to show that the universal functional is convex. For this purpose we first show that the entropy is strictly concave (\cite{ruelle1969statistical}, \cite{lieb1975convexity}, \cite{wehrl1978general}).
\begin{thm}
The entropy is strictly concave, i.e., for any $\hat{\r}_0, \hat{\r}_1 \in \NdensMat$ and $\l \in (0,1)$ we have $S[\l \hat{\r}_0+(1-\l)\hat{\r}_1]> \l S[\hat{\r}_0]+(1-\l)S[\hat{\r}_1]$.
\end{thm}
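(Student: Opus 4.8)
The plan is to reduce the operator statement to the strict concavity of the scalar function $f(t)\isDefinedAs -t\log t$ on $[0,\infty)$ (with $f(0)=0$), for which $f''(t)=-1/t<0$ on $(0,\infty)$, so that $f$ is strictly concave while $S[\hat{\r}]=\Trace\{f(\hat{\r})\}$. The workhorse is a pointwise estimate: for any $\hat{\r}\in\NdensMat$ with spectral decomposition $\hat{\r}=\sum_i \m_i \ket{u_i}\bra{u_i}$ and any unit vector $\ket{e}$, the numbers $p_i\isDefinedAs\abs{\braket{e|u_i}}^2$ form a probability distribution, so Jensen's inequality for the concave $f$ gives $\bra{e}f(\hat{\r})\ket{e}=\sum_i p_i f(\m_i)\le f\bigl(\sum_i p_i \m_i\bigr)=f\bigl(\bra{e}\hat{\r}\ket{e}\bigr)$. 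Because $f$ is \emph{strictly} concave, equality holds if and only if all $\m_i$ with $p_i>0$ coincide, i.e.\ iff $\ket{e}$ is an eigenvector of $\hat{\r}$.

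Next, I would set $\hat{\r}_\l\isDefinedAs \l\hat{\r}_0+(1-\l)\hat{\r}_1$ and choose an orthonormal eigenbasis $\{\ket{e_k}\}$ of $\hat{\r}_\l$. Evaluating the trace in this basis,
\begin{align*}
S[\hat{\r}_\l]
&=\sum_k f\bigl(\bra{e_k}\hat{\r}_\l\ket{e_k}\bigr)
=\sum_k f\bigl(\l\,a_k+(1-\l)\,b_k\bigr),
\end{align*}
where $a_k\isDefinedAs\bra{e_k}\hat{\r}_0\ket{e_k}$ and $b_k\isDefinedAs\bra{e_k}\hat{\r}_1\ket{e_k}$. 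Scalar concavity of $f$ bounds each summand below by $\l f(a_k)+(1-\l)f(b_k)$, and applying the pointwise estimate to $\hat{\r}_0$ and $\hat{\r}_1$ separately gives $\sum_k f(a_k)\ge \Trace\{f(\hat{\r}_0)\}=S[\hat{\r}_0]$ and likewise for $\hat{\r}_1$. Chaining these yields $S[\hat{\r}_\l]\ge \l S[\hat{\r}_0]+(1-\l)S[\hat{\r}_1]$, which already establishes concavity.

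The main obstacle, and really the only delicate point, is upgrading this to the strict inequality for $\hat{\r}_0\neq\hat{\r}_1$. For the final bound to be an equality, both chained steps must be equalities simultaneously. Equality in the scalar step forces $a_k=b_k$ for every $k$ by strict concavity of $f$, while equality in the pointwise estimate forces each $\ket{e_k}$ to be an eigenvector of both $\hat{\r}_0$ and $\hat{\r}_1$. The latter means $\hat{\r}_0$ and $\hat{\r}_1$ are simultaneously diagonal in the basis $\{\ket{e_k}\}$, so that the $a_k,b_k$ are precisely their respective eigenvalues; combined with $a_k=b_k$ for all $k$ this gives $\hat{\r}_0=\hat{\r}_1$. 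Hence, whenever $\hat{\r}_0\neq\hat{\r}_1$ at least one inequality is strict and $S[\l\hat{\r}_0+(1-\l)\hat{\r}_1]>\l S[\hat{\r}_0]+(1-\l)S[\hat{\r}_1]$, as claimed. The care needed in this equality analysis, rather than the concavity estimate itself, is where the argument must be watched most closely.
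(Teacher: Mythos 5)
Your proof is correct, and its skeleton is the same as the paper's: expand $S[\hat{\r}_\l]$ in an eigenbasis of $\hat{\r}_\l$, apply concavity of $s(t)=-t\log t$ to the diagonal matrix elements, and apply the Jensen-type pointwise estimate $\braket{e|s(\hat{\r})|e}\le s\bigl(\braket{e|\hat{\r}|e}\bigr)$ to $\hat{\r}_0$ and $\hat{\r}_1$ separately. The genuine difference is where the strictness comes from, and on this point your treatment is better than the paper's. The paper places the strict inequality at the scalar-concavity step, which requires $\braket{\psi_k|\hat{\r}_0|\psi_k}\neq\braket{\psi_k|\hat{\r}_1|\psi_k}$ for at least one eigenvector $\psi_k$ of $\hat{\r}_\l$; this can fail even though $\hat{\r}_0\neq\hat{\r}_1$. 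Concretely, take $\l=\tfrac{1}{2}$, two orthonormal $N$-particle states $\ket{\Psi_1},\ket{\Psi_2}$, and for sufficiently small $\e>0$
\begin{equation*}
\hat{\r}_{0,1}=\tfrac{2}{3}\ket{\Psi_1}\bra{\Psi_1}+\tfrac{1}{3}\ket{\Psi_2}\bra{\Psi_2}\pm\e\bigl(\ket{\Psi_1}\bra{\Psi_2}+\ket{\Psi_2}\bra{\Psi_1}\bigr).
\end{equation*}
Then $\hat{\r}_\l=\tfrac{2}{3}\ket{\Psi_1}\bra{\Psi_1}+\tfrac{1}{3}\ket{\Psi_2}\bra{\Psi_2}$ has nondegenerate nonzero spectrum, so its eigenbasis is essentially forced: $\ket{\Psi_1},\ket{\Psi_2}$ plus vectors in the orthogonal complement, which lies in the common kernel of $\hat{\r}_0$ and $\hat{\r}_1$. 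In this basis all diagonal elements of $\hat{\r}_0$ and $\hat{\r}_1$ coincide, so the step the paper marks ``$>$'' is in fact an equality, and all of the strictness sits in the Jensen step, which the paper leaves as ``$\geq$''. Your proof is immune to this: you keep both inequalities weak and then show that simultaneous equality in both forces every $\ket{e_k}$ to be a common eigenvector of $\hat{\r}_0$ and $\hat{\r}_1$ with $a_k=b_k$, hence $\hat{\r}_0=\hat{\r}_1$. So your equality analysis is not merely added caution; it closes a real (if easily repaired) gap in the published argument, while the theorem itself is of course unaffected.
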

\begin{proof}
Let $\hat{\r}_\l= \l \hat{\r}_0+(1-\l)\hat{\r}_1 = \sum_k w_k \ket{\psi_k}\bra{\psi_k}$. We use strict concavity of the function $s(x)=-x\log(x)$ and we get
\begin{align*}
    S[\hat{\r}_\l]&=-\sum_k w_k \log(w_k) = \sum_k s\big(\braket{\psi_k|\hat{\r}_\l|\psi_k } \big) \\
    &= \sum_k s\big(\l \braket{ \psi_k|\hat{\r}_0|\psi_k } + 
    (1-\l) \braket{ \psi_k | \hat{\r}_1|\psi_k}\big)\\
    &> \l \sum_k s\big(\braket{ \psi_k| \hat{\r}_0|\psi_k } \big) +
    (1-\l) \sum_k s\big(\braket{ \psi_k|\hat{\r}_1|\psi_k} \big)\\ 
    &\geq \l \sum_k \braket{ \psi_k|s(\hat{\r}_0)|\psi_k } + 
    (1- \l) \sum_k \braket{ \psi_k |s(\hat{\r}_1)|\psi_k } \\
    &= \l S[\hat{\r}_0] + (1- \l) S[\hat{\r}_1]
\end{align*}
where we used Jensen's inequality for the last inequality.
\end{proof}
\begin{Cor}
The Helmholtz functional $\Helmholtz_v[\hat{\r}]$ is strictly convex in the density operator $\hat{\r}$. 
\end{Cor}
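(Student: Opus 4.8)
The plan is to exploit the additive decomposition $\Helmholtz_v[\hat{\r}] = E_v[\hat{\r}] - \b^{-1}S[\hat{\r}]$ from the definition~\eqref{definition Helmholtz functional} and to treat the two summands separately. The only genuine work, namely strict concavity of the entropy $S$, has already been carried out in the theorem just proved, so what remains is to check that the energy term is harmless and that recombining the two pieces preserves strictness.

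First I would note that the energy functional $E_v[\hat{\r}] = \Trace[\hat{\r}\hat{H}_v]$ is linear in $\hat{\r}$: for any $\hat{\r}_0, \hat{\r}_1 \in \NdensMat$ and $\l \in (0,1)$, linearity of the trace gives $E_v[\l\hat{\r}_0 + (1-\l)\hat{\r}_1] = \l E_v[\hat{\r}_0] + (1-\l)E_v[\hat{\r}_1]$. In particular $E_v$ is convex (indeed affine), so it contributes equality, not strict inequality, to any convexity estimate.

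Next, since $\b = 1/T > 0$, multiplying the strictly concave functional $S$ by the negative constant $-\b^{-1}$ turns the strict concavity inequality into a strict convexity inequality for $-\b^{-1}S$. Adding the affine $E_v$ then keeps the inequality strict, so for $\l \in (0,1)$ and $\hat{\r}_0 \neq \hat{\r}_1$ the chain would read
\begin{align*}
\Helmholtz_v[\l\hat{\r}_0 + (1-\l)\hat{\r}_1]
&= E_v[\l\hat{\r}_0 + (1-\l)\hat{\r}_1] - \b^{-1}S[\l\hat{\r}_0 + (1-\l)\hat{\r}_1] \\
&< \l\big(E_v[\hat{\r}_0] - \b^{-1}S[\hat{\r}_0]\big) + (1-\l)\big(E_v[\hat{\r}_1] - \b^{-1}S[\hat{\r}_1]\big) \\
&= \l\Helmholtz_v[\hat{\r}_0] + (1-\l)\Helmholtz_v[\hat{\r}_1],
\end{align*}
which is exactly the claimed strict convexity of $\Helmholtz_v$.

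I do not expect a serious obstacle: the statement is an immediate corollary of the strict concavity of the entropy, the linearity of the energy, and the elementary fact that adding an affine functional to a strictly convex one leaves it strictly convex. The only point requiring a moment's care is the sign and positivity of the prefactor $\b^{-1}$: it must be strictly positive for the concavity of $S$ to transfer into convexity of $-\b^{-1}S$, which is guaranteed since we work at a finite positive temperature $T$.
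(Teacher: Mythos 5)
Your proof is correct and follows exactly the paper's argument: the paper likewise observes that $\Helmholtz_v$ is the sum of the linear functional $E_v$ and the strictly convex functional $-\b^{-1}S$ (strict convexity coming from the strict concavity of the entropy established in the preceding theorem). Your write-up merely spells out the details — including the correct restriction to $\hat{\r}_0 \neq \hat{\r}_1$ and the positivity of $\b^{-1}$ — which the paper leaves implicit.
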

\begin{proof}
It follows directly from the fact that the Helmholtz functional is the sum of a linear and a strictly convex functional.
\end{proof}
\begin{thm}
\label{thm convexity of universal functional}
    The universal functional $\uniF[\g]$ is convex on $\NoneMat$.
\end{thm}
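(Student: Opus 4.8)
The plan is to combine the strict convexity of $\Helmholtz_0$ in $\hat{\r}$ (the preceding corollary) with the \emph{linearity} of the map $\hat{\r}\mapsto\g[\hat{\r}]$ and the elementary fact that constrained minimization transports convexity whenever the constraint sets behave well under convex combinations. Concretely, I would fix $\g_0,\g_1\in\NoneMat$ and $\l\in(0,1)$, set $\g_\l\isDefinedAs\l\g_0+(1-\l)\g_1$, and aim to show $\uniF[\g_\l]\leq\l\uniF[\g_0]+(1-\l)\uniF[\g_1]$.

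First I would note that $\NoneMat$ is convex: in both the bosonic and fermionic cases it is cut out by the affine condition $\trace\{\g\}=N$ together with the convex constraints $\g\geq0$ (and, for fermions, $\g^2\leq\g$, which bounds the eigenvalues by $1$), so $\g_\l\in\NoneMat$. Next, by Coleman's Theorem~\ref{thm N representability of 1RDMs} the feasible set $\{\hat{\r}\in\NdensMat:\hat{\r}\to\g_i\}$ is nonempty, and since we work in a finite basis with fixed $N$ it is a closed subset of the compact set $\NdensMat$; as $\Helmholtz_0$ is continuous there, the infimum in \eqref{defintion universal functional} is attained. Hence I may choose genuine minimizers $\hat{\r}_0\to\g_0$ and $\hat{\r}_1\to\g_1$ with $\uniF[\g_i]=\Helmholtz_0[\hat{\r}_i]$, rather than having to argue through minimizing sequences.

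The crux is then a one-line trial-operator argument. Because $\g_{ij}[\hat{\r}]=\Trace\{\hat{\r}\,\crea{a}_j\anni{a}_i\}$ is linear in $\hat{\r}$, the convex combination $\hat{\r}_\l\isDefinedAs\l\hat{\r}_0+(1-\l)\hat{\r}_1$ satisfies $\hat{\r}_\l\to\g_\l$, and $\hat{\r}_\l\in\NdensMat$ by convexity of $\NdensMat$. Thus $\hat{\r}_\l$ is an admissible trial operator in the infimum defining $\uniF[\g_\l]$, whence
\begin{align*}
\uniF[\g_\l]\leq\Helmholtz_0[\hat{\r}_\l]\leq\l\Helmholtz_0[\hat{\r}_0]+(1-\l)\Helmholtz_0[\hat{\r}_1]=\l\uniF[\g_0]+(1-\l)\uniF[\g_1],
\end{align*}
where the middle inequality is convexity of $\Helmholtz_0$ and the outer equalities are the choice of minimizers.

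I expect no serious obstacle: the only point requiring care is the attainment of the infimum, which is what legitimizes selecting the concrete minimizers $\hat{\r}_0,\hat{\r}_1$ that are combined in the key step, and in the present finite-dimensional compact setting this is automatic. I would also remark that although $\Helmholtz_0$ is \emph{strictly} convex, the map $\hat{\r}\mapsto\g$ is many-to-one, so only (non-strict) convexity of $\uniF$ can be concluded here — which is precisely the assertion of the theorem, and is exactly what is needed downstream for the subgradient analysis of $\uniF$.
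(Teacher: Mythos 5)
Your proof is correct, and its core is the same as the paper's: the linearity of $\hat{\r}\mapsto\g[\hat{\r}]$ makes $\l\hat{\r}_0+(1-\l)\hat{\r}_1$ an admissible trial operator for $\g_\l$, and convexity of $\Helmholtz_0$ then yields the inequality. The one genuine structural difference is that you route the argument through \emph{attainment} of the infimum in \eqref{defintion universal functional}, which you justify via compactness of $\set{\hat{\r}\in\NdensMat | \hat{\r}\to\g}$ and continuity of $\Helmholtz_0$. The paper avoids this entirely by manipulating the infima directly: it writes $\l\uniF[\g_0]+(1-\l)\uniF[\g_1]$ as a double infimum over independent pairs $(\hat{\r}_0,\hat{\r}_1)$, applies convexity of $\Helmholtz_0$ inside, and observes that the resulting convex combinations form a subset of the admissible operators for $\g_\l$, so the chain closes without ever selecting minimizers. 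This matters for the paper's logical ordering: attainment (continuity of energy and entropy plus compactness) is only established in the final section, \emph{after} the convexity theorem, so your proof as written forward-references material the paper has not yet proven at that point — not circular, since those results do not use convexity, but a restructuring of the dependency graph. If you want to keep your phrasing while removing the dependency, replace the minimizers by $\epsilon$-approximate minimizers $\hat{\r}_i$ with $\Helmholtz_0[\hat{\r}_i]\leq\uniF[\g_i]+\epsilon$, run the same trial-operator inequality, and let $\epsilon\to 0$; this recovers the paper's generality at no extra cost. Your closing remark that only non-strict convexity can be expected is also apt — strictness is indeed lost because $\hat{\r}\mapsto\g$ is many-to-one, and non-strict convexity is all the subgradient argument downstream requires.
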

\begin{proof}
Let $\g_0, \g_1 \in \NoneMat$, $\l \in [0,1]$ and $\g_\l= \l \g_0+(1-\l)\g_1$ and taking $\hat{\r},\hat{\r}_0, \hat{\r}_1 \in \NdensMat$ we get
\begin{multline*}
  \l \uniF[\g_0] + (1-\l)\uniF[\g_1] \\*
  \begin{aligned}
    &= \l \inf_{\hat{\r}_0 \to \g_0} \Helmholtz_0[\hat{\r}_0] + 
    (1-\l) \inf_{\hat{\r}_1 \to \g_1} \Helmholtz_0[\hat{\r}_1] \\
    &= \inf_{\hat{\r}_0 \to \g_0} \inf_{\hat{\r}_1 \to \g_1}  \l \Helmholtz_0[\hat{\r}_0] + 
    (1-\l) \Helmholtz_0[\hat{\r}_1]\\
    &\geq \inf_{\hat{\r}_0 \to \g_1}\inf_{\hat{\r}_1 \to \g_1} \Helmholtz_0[\l \hat{\r}_0 +
    (1-\l)\hat{\r}_1] \\
    &= \inf_{\hat{\r}\to \g_\l} \Helmholtz_0[\hat{\r}] = \uniF[\g_\l]. &\qedhere
  \end{aligned}
\end{multline*}
\end{proof}

\section{Final Result}
\label{section final result}
Now we want to show that the universal functional $\uniF$ is differentiable. Differentiability is only defined on an open set. However, the set $\NoneMat$ has empty interior in $\HermitanMat(\Nbas)$. Thus, we need to embed $\NoneMat$ in a topological space where $\ToneMat$ is the interior of $\NoneMat$. The idea is to use the following result about sub\-gradients and sub\-differentials.

\begin{thm}
\label{properties of the subgradient/subdifferential}
Let $X$ be a finite dimensional vector space and let $f:X \to \bR \cup \{\infty\}$ be a convex function with domain $M$. Assume $M$ is contained in $a+\mathfrak{L}$ such that $\mathfrak{L}$ is a subspace with the lowest dimension such that there exists $a \in X$ with $M \subset a+ \mathfrak{L}$ \footnote{This means $M$ is contained in an affine space}. Let $\partial _\mathfrak{L}f(x)\isDefinedAs\partial f(x) \cap \mathfrak{L}$ where $\partial f(x)$ is the subdifferential of $f$ at a point $x$ in the set $M$. Then the following properties hold for $\partial_\mathfrak{L} f(x)$
\begin{enumerate}[label=(\roman*)]
\item the set $\partial_\mathfrak{L} f(x)$ is nonempty,
\item $f$ is differentiable at $x$ if and only if $\partial_\mathfrak{L} f(x)$ contains only one element. In that case this element equals the usual gradient. (With differentiable we mean that there is a linear map $J:\mathfrak{L}\to \bR$ such that for all $h \in \mathfrak{L}$ we have
$\lim_{h \to 0}\frac{1}{\norm{h}_\mathfrak{L}}\abs{f(x+h)-f(x)-J(h)}=0$. %, i.e.\ $\partial_ f(x)= \{ \nabla f(x)\}.$
%($ii$) & the set is compact and convex,  \\  
%($iii$)& for any $h \in X,$ $f_h'(x)=\max \{\langle d|h \rangle \; : \; d \in \partial f(x) \},$\\ 
\end{enumerate}
\end{thm}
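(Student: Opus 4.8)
The plan is to reduce the statement to the classical finite-dimensional theory of subdifferentials by passing to the minimal subspace $\mathfrak{L}$ that carries the domain, where the effective domain is full-dimensional. Fix $x \in M$, write $x = a + u$ with $u \in \mathfrak{L}$, and introduce the translated restriction $\tilde f : \mathfrak{L} \to \bR \cup \{\infty\}$, $\tilde f(w) \isDefinedAs f(a+w)$, whose effective domain is $\tilde M \isDefinedAs M - a \subset \mathfrak{L}$. Since $\mathfrak{L}$ is by hypothesis the lowest-dimensional subspace with $M \subset a+\mathfrak{L}$, the affine hull of $M$ is exactly $a+\mathfrak{L}$; hence the relative interior of $M$ is nonempty and relatively open, so $\tilde M$ has nonempty interior in $\mathfrak{L}$ and $\tilde f$ is a proper convex function on the genuine vector space $\mathfrak{L}$. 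This is the step that makes the classical machinery applicable: on $X$ the domain $M$ has empty interior, but on $\mathfrak{L}$ it does not.

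The second step is to identify $\partial_\mathfrak{L} f(x)$ with the intrinsic subdifferential $\partial \tilde f(u)$ computed inside $\mathfrak{L}$, equipped with the inner product inherited from $X$ so that $\mathfrak{L}$ is self-dual. The key observation is that every difference $y-x$ with $y \in M$ lies in $\mathfrak{L}$, so in the subgradient inequality $f(y) \geq f(x) + \langle g, y-x\rangle$ only the $\mathfrak{L}$-component of $g$ enters: writing $g = g_0 + g_1$ with $g_0 \in \mathfrak{L}$, $g_1 \in \mathfrak{L}^{\perp}$ gives $\langle g, y-x\rangle = \langle g_0, y-x\rangle$. Consequently $\partial f(x) = \partial_\mathfrak{L} f(x) + \mathfrak{L}^{\perp}$, and a vector $g \in \mathfrak{L}$ lies in $\partial_\mathfrak{L} f(x)$ if and only if $\tilde f(w) \geq \tilde f(u) + \langle g, w-u\rangle$ for all $w \in \tilde M$, i.e. $g \in \partial \tilde f(u)$. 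This also explains why the intersection with $\mathfrak{L}$ is forced: whenever $\mathfrak{L}\neq X$ the full subdifferential $\partial f(x)$ contains a translate of the entire subspace $\mathfrak{L}^{\perp}$ and is never a singleton, whereas $\partial_\mathfrak{L} f(x)$ can be.

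With the identification in hand, both claims reduce to standard results applied to $\tilde f$ at $u$. For (i), I would invoke the classical fact that a proper convex function on a finite-dimensional space has a nonempty subdifferential at every point of the relative interior of its domain; since the points of interest — the purely fractional spectra making up $\ToneMat$ — are precisely the interior points of $\tilde M$, the set $\partial_\mathfrak{L} f(x) = \partial \tilde f(u)$ is nonempty there. For (ii), I would invoke the classical equivalence that a finite convex function is differentiable at an interior point of its domain exactly when its subdifferential there is a single point, in which case that point is the gradient. One only has to check that the notion of differentiability stated in the theorem — existence of a linear $J:\mathfrak{L}\to\bR$ with $\lim_{h \to 0}\frac{1}{\norm{h}_\mathfrak{L}}\abs{f(x+h)-f(x)-J(h)}=0$ — is literally differentiability of $\tilde f$ at $u$ as a function on $\mathfrak{L}$, which it is by construction.

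The main obstacle I expect is not a single hard estimate but the careful bookkeeping of the reduction: verifying that minimality of $\dim \mathfrak{L}$ genuinely yields a domain $\tilde M$ with nonempty interior in $\mathfrak{L}$, and that differentiability here is an intrinsically $\mathfrak{L}$-relative notion, so that the degenerate directions $\mathfrak{L}^{\perp}$ must be quotiented out before one can hope for a unique gradient. A secondary point to state precisely is that nonemptiness in (i) is a relative-interior statement: at relative-boundary points of $M$ the subdifferential may be empty — indeed the entropy contribution to $\uniF$ has unbounded slope as an occupation number tends to $0$ or $1$ — so the claim is used exactly on $\ToneMat$, which is where differentiability of $\uniF$ is ultimately needed.
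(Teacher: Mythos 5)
Your proposal is correct, but there is nothing in the paper to compare it against: the paper never proves Theorem~\ref{properties of the subgradient/subdifferential}, invoking it instead as a known result from convex analysis. The argument you give is precisely the standard route by which that result is established — translate by $a$ and restrict to $\mathfrak{L}$, note that minimality of $\dim\mathfrak{L}$ forces $a+\mathfrak{L}$ to be the affine hull of $M$ so that $\tilde{M}=M-a$ has nonempty interior in $\mathfrak{L}$, identify $\partial_\mathfrak{L} f(x)$ with the intrinsic subdifferential $\partial\tilde{f}(u)$ via the splitting $\partial f(x)=\partial_\mathfrak{L} f(x)+\mathfrak{L}^{\perp}$ (which is valid exactly because every $y-x$ with $y\in M$ lies in $\mathfrak{L}$), and then cite the classical finite-dimensional facts (nonemptiness of the subdifferential on the relative interior of the domain, and the equivalence of differentiability with a singleton subdifferential, e.g.\ Rockafellar, Theorems~23.4 and 25.1). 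Two of your side remarks deserve emphasis. First, your observation that (i) as literally stated can fail at relative boundary points of $M$ (compare $f(x)=-\sqrt{x}$ on $[0,\infty)$ at $x=0$) identifies a genuine imprecision in the theorem as printed; the claim is true, and is used by the paper, only for $x$ in the relative interior of $M$, which in the application is exactly $\ToneMat$, where indeed the entropy contribution to $\uniF$ would otherwise produce unbounded slopes. Second, your identification $\partial_\mathfrak{L} f(x)=\partial\tilde{f}(u)$ tacitly uses an inner product to identify $\mathfrak{L}$ with its dual; this is harmless here, since the paper works in $\HermitanMat(\Nbas)$ with the trace inner product, under which the gauge condition $\trace\{v\}=0$ singles out precisely $\cV=\mathfrak{L}$, consistent with the paper's remark that the differential is ``contained in $\cV$.''
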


\begin{figure}[t]
\begin{center}
\begin{tikzpicture}[3d view={60}{30}]
    \coordinate (O) at (0,0,0);
    \coordinate (A) at (3,0,0);
    \coordinate (B) at (0,3,0);
    \coordinate (C) at (0,0,3);
    \coordinate (M) at ($0.33*(A) + 0.33*(B) + 0.33*(C)$);
    
    \draw[->] (O) -- (4,0,0) node[pos=1.1]{$n_1$};
    \draw[->] (O) -- (0,4,0) node[pos=1.1]{$n_2$};
    \draw[->] (O) -- (0,0,4) node[pos=1.1]{$n_3$};
    
    \draw[->] (O) -- (M) node[midway, above left]{$a$};
    
    \filldraw[color=teal!20!white, opacity=0.5] (A) -- (B) -- (C) -- cycle;
    \filldraw [teal] (A) circle (1pt) node[below, color=black]{\small{$2$}};
    \filldraw [teal] (B) circle (1pt) node[above, color=black]{\small{$2$}};
    \filldraw [teal] (C) circle (1pt) node[left, color=black]{\small{$2$}};
%    \filldraw [teal] (M) circle (0.5pt);
    \draw[color=teal] (A) -- (B) -- (C) -- cycle;
    \draw[color=teal!40!white, opacity=0.5] (A) -- ($0.5*(B)+0.5*(C)$) -- cycle;
    \draw[color=teal!40!white, opacity=0.5] (B) -- ($0.5*(A)+0.5*(C)$) -- cycle;
    \draw[color=teal!40!white, opacity=0.5] (C) -- ($0.5*(B)+0.5*(A)$) -- cycle;
\end{tikzpicture}
%\begin{tikzpicture}[scale=1.7]
%    \draw[->] (0,0)--(0,2) node[pos=1.1]{$n_3$};
%    \draw[->] (0,0)--(1.85,0.6) node[pos=1.1]{$n_2$};
%    \draw[->] (0,0)--(1.2,-1.4) node[pos=1.1]{$n_1$};
%    \draw[->] (0,0)--(0.5,0.55) node[midway, above left]{$a$};
    
    %\draw[color=black!60!white] (0.35,-0.408335+0.7)--(0.35,-0.408335);
    %\draw[color=black!60!white] (1,0.324-0.521735)--(0.65,0.211)--(0.65,0.911);
    %\draw[color=black!60!white] (0.35,-0.408335+0.7)--(0,0.7)--(0.65,0.911);
    %\draw[color=black!60!white] (1,0.324-0.521735)--(0.35,-0.408335);
%    \filldraw[color=teal!20!white, opacity=0.5] (0,1.4)--(1.3,0.422)--(0.7,-0.81667)--(0,1.4);
%    \filldraw [teal] (0,1.4) circle (0.5pt) node[left, color=black]{\small{$2$}};
%    \filldraw [teal] (1.3,0.422) circle (0.5pt) node[above, color=black]{\small{$2$}};
%    \filldraw [teal] (0.7,-0.81667) circle (0.5pt) node[left, color=black]{\small{$2$}};
    %\filldraw[color=teal!60!white, opacity=0.5] (0.65,0.911)--(0.35,-0.408335+0.7)--(1,0.324-0.521735);
%    \draw[color=teal] (0,1.4)--(1.3,0.422)--(0.7,-0.81667)--(0,1.4);
    %\draw[color=black!60!white] (1,0.324-0.521735+0.7)--(1,0.324-0.521735);
    %\draw[color=black!60!white] (0.65,0.911)--(1,0.324-0.521735+0.7)--(0.35,-0.408335+0.7);
%\end{tikzpicture}
\end{center}
\caption{Representation of $\overline{\mathscr{N}}_{N,+}$ in terms of the occupation numbers for $\Nbas=3$ and $N=2$ (See proof of Theorem~\ref{thm N representability of 1RDMs}). The vector $a$ shifts a $2$ dimensional subspace to an affine space which contains $\overline{\mathscr{N}}_{N,+}$.}
\label{affine space of 1RDMs}
\end{figure}
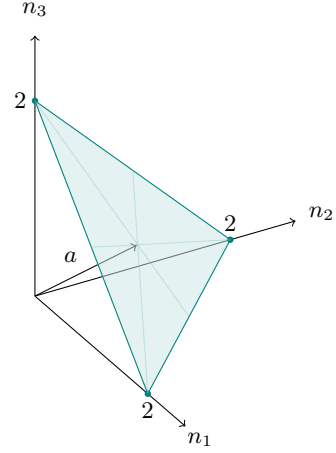

The universal functional has domain $\NoneMat$ which is contained in $\mathfrak{L}+a$  with $\mathfrak{L}=\Set{x \in \HermitanMat(\Nbas) | \trace\{x\}=0}$ and $a=N/\Nbas \cdot \mathds{1}$~\footnote{We could have taken a different point in $\mathscr{N}_{N, \pm}$, but the current choice is the most natural in this setting.} and its (relative) interior is $\mathscr{N}_{N, \pm}$. Fig. \ref{affine space of 1RDMs} shows $\overline{\mathscr{N}}_{N,+}$ for $2$ particles and $3$ basis functions. This also justifies the choice of the potential gauge. The aim is to get the relation $\partial \uniF/\partial \g=-v$. But as mentioned in Theorem \eqref{properties of the subgradient/subdifferential}, the differential is a map $J:\mathfrak{L}\to \bR$, i.e.\ it is contained in $\cV$. We can now apply the above theorem for all 1RDMs contained in $\ToneMat$. 
\begin{thm}
\label{main thm differentiation of F and v representability}
    If the infimum in \eqref{defintion universal functional} is attained, then
    \begin{enumerate}[label=(\roman*)]
    \item \( \ToneMat=\VoneMat \)\\
    \item the universal functional $\uniF[\g]$ is differentiable on $\ToneMat$.
    \end{enumerate}
\end{thm}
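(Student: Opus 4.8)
The plan is to realize the elements of the restricted subdifferential $\partial_\mathfrak{L}\uniF(\g)$ as potentials in $\cV$ and then read off both claims from Theorem~\ref{properties of the subgradient/subdifferential}. The bridge is the Fenchel-type identity
\[
    \HelmholtzV[v] = \inf_{\g' \in \NoneMat}\big(\uniF[\g'] + \trace\{v\g'\}\big),
\]
which exhibits $-\HelmholtzV[\cdot]$ as the convex conjugate of $\uniF$. I would first record the elementary equivalence that, for fixed $v \in \cV$, a point $\g$ attains this infimum if and only if $-v \in \partial\uniF(\g)$: optimality means $\uniF[\g'] + \trace\{v\g'\} \geq \uniF[\g] + \trace\{v\g\}$ for all $\g'$, which rearranges into the subgradient inequality $\uniF[\g'] \geq \uniF[\g] + \trace\{(-v)(\g'-\g)\}$. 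Since $v \in \cV = \mathfrak{L}$, such a subgradient automatically lies in $\partial_\mathfrak{L}\uniF(\g)$.

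The second ingredient is uniqueness of the inner minimizer, and this is exactly where the hypothesis that the infimum in \eqref{defintion universal functional} is attained enters. For fixed $v$ the Gibbs state $\hat{\r}_v$ is the \emph{unique} minimizer of $\Helmholtz_v$ over $\NdensMat$, because $\Helmholtz_v$ is strictly convex in $\hat{\r}$. Suppose $\g$ attains the outer infimum and $\uniF[\g]$ is attained by some $\hat{\r}\to\g$. Then
\[
    \Helmholtz_v[\hat{\r}] = \Helmholtz_0[\hat{\r}] + \trace\{v\g\} = \uniF[\g] + \trace\{v\g\} = \HelmholtzV[v] = \min_{\hat{\r}'\in\NdensMat}\Helmholtz_v[\hat{\r}'],
\]
so $\hat{\r}$ minimizes $\Helmholtz_v$, forcing $\hat{\r}=\hat{\r}_v$ and hence $\g = \g[\hat{\r}_v] = \g_v \in \VoneMat$. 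Thus any outer minimizer is unique and equals the $v$-representable 1RDM $\g_v$.

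For statement~(i), the inclusion $\VoneMat \subseteq \ToneMat$ is already in hand, since Gibbs states are positive definite and therefore have strictly fractional occupations. For the reverse, take any $\g \in \ToneMat$. As $\g$ is a relative interior point of the domain of $\uniF$ and $\uniF$ is convex (Theorem~\ref{thm convexity of universal functional}), part~(i) of Theorem~\ref{properties of the subgradient/subdifferential} yields a nonempty $\partial_\mathfrak{L}\uniF(\g)$; choosing $-v \in \partial_\mathfrak{L}\uniF(\g)$ with $v \in \cV$ and reversing the equivalence above shows $\g$ attains the inner minimization for this $v$. By the uniqueness just established, $\g = \g_v \in \VoneMat$, so $\ToneMat \subseteq \VoneMat$ and equality follows.

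For statement~(ii), I would invoke part~(ii) of Theorem~\ref{properties of the subgradient/subdifferential}: differentiability of $\uniF$ at $\g$ is equivalent to $\partial_\mathfrak{L}\uniF(\g)$ being a singleton, and nonemptiness is already known. If $-v_1, -v_2 \in \partial_\mathfrak{L}\uniF(\g)$ with $v_1,v_2\in\cV$, then $\g$ is the unique minimizer for both, so $\g = \g_{v_1} = \g_{v_2}$; Mermin's theorem (Theorem~\ref{Mermin's theorem for canonical ensemble}), the injectivity of $v\mapsto\g_v$ on $\cV$, then gives $v_1=v_2$. Hence the subdifferential is a singleton at every $\g\in\ToneMat$, $\uniF$ is differentiable there, and its gradient is the unique $-v$ with $v\mapsto\g$, recovering $\partial\uniF/\partial\g=-v$. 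I expect the genuine obstacle to be the uniqueness of the inner minimizer: it is what converts the purely convex-analytic existence of a subgradient into an actual Gibbs state, and it cannot be secured without the attainment hypothesis, since a minimizing sequence in $\NdensMat$ might otherwise fail to produce an admissible $\hat{\r}\to\g$.
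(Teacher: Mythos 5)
Your proof follows essentially the same route as the paper's: convexity plus Theorem~\ref{properties of the subgradient/subdifferential} gives a nonempty subdifferential, the subgradient inequality identifies $-h$ (your $v$) as a potential generating $\g$, and Theorem~\ref{Mermin's theorem for canonical ensemble} then forces the subdifferential to be a singleton, yielding differentiability. The only difference is one of explicitness: you spell out precisely where the attainment hypothesis enters --- strict convexity of $\Helmholtz_v[\hat{\r}]$ forces the attaining $\hat{\r}\to\g$ to coincide with the Gibbs state $\hat{\r}_v$, so $\g=\g_v\in\VoneMat$ --- a step the paper compresses into the single phrase ``yields a potential generating $\g$.''
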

\begin{proof}
Convexity of $\uniF$ implies that for any $\g \in \ToneMat$ there exists at least one subgradient $h \in \cV$. So for all $\tilde{\g} \in \NoneMat$ it holds \(\uniF[\Tilde{\g}]+ \braket{ -h| \Tilde{\g} } \geq \uniF[\g]+\braket{ -h| \g } \), which implies 
\begin{align*}
    \uniF[\g] + \braket{ -h| \g } 
    \leq \min_{\mathclap{\tilde{\g}\in \NoneMat}} \uniF[\Tilde{\g}]+ \braket{ -h| \tilde{\g}}
    = \HelmholtzV[-h].
\end{align*}
Thus, the negative of the subgradient, $-h$, yields a potential generating $\g$ and hence \(\ToneMat=\VoneMat \). By Theorem~\ref{Mermin's theorem for canonical ensemble}, we get that there is only one such potential. Hence the subgradient is unique and $\uniF[\g]$ is differentiable for all $\g \in \ToneMat$ by Theorem~\ref{properties of the subgradient/subdifferential}. 
\end{proof}

We proved $v$-representability under the assumption that the minimum in~\eqref{defintion universal functional} is attained. To finish the proof we still need to show that this is indeed the case.
The idea is to show that the relevant functions are continuous and then use the fact that continuous functions attain their minima (and maxima) over compact sets.
\begin{proposition}
The energy $E_v[\hat{\r}]$ is Lipschitz continuous on $\NdensMat$.
\end{proposition}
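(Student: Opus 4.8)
The plan is to exploit the fact that $E_v[\hat{\r}]=\Trace[\hat{\r}\hat H_v]$ is \emph{linear} in the density-matrix operator, so that establishing Lipschitz continuity reduces to a single application of the Cauchy--Schwarz inequality for the Hilbert--Schmidt inner product underlying the norm $\norm{\cdot}_2$ defined in~\eqref{defintion norm for density-matrix operators}. No estimate on the individual eigenvalues or on the logarithmic entropy term is needed, because the entropy is not part of $E_v$; the difficulty of the problem is entirely absorbed into a norm bound on the fixed operator $\hat H_v$.

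First I would fix two operators $\hat{\r}_1,\hat{\r}_2\in\NdensMat$ and write the difference of energies as a single trace,
\begin{align*}
  \abs{E_v[\hat{\r}_1]-E_v[\hat{\r}_2]} = \abs{\Trace\bigl[(\hat{\r}_1-\hat{\r}_2)\hat H_v\bigr]}.
\end{align*}
Recognizing $\Trace[A^\dagger B]$ as the Hilbert--Schmidt inner product on operators over $\oneH^N_\pm$ and using that $\hat{\r}_1-\hat{\r}_2$ is Hermitian, I would then apply Cauchy--Schwarz to bound this by
\begin{align*}
  \abs{\Trace\bigl[(\hat{\r}_1-\hat{\r}_2)\hat H_v\bigr]} \leq \norm{\hat{\r}_1-\hat{\r}_2}_2\,\norm{\hat H_v}_2,
\end{align*}
so that $\norm{\hat H_v}_2$ is the candidate Lipschitz constant.

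The remaining step is to confirm that this constant is finite. Since the entire construction lives on the finite-dimensional $N$-particle space $\oneH^N_\pm$ (because the one-particle basis satisfies $\Nbas<\infty$ and the particle number $N$ is fixed), every operator on $\oneH^N_\pm$ is Hilbert--Schmidt; in particular $\hat H_v=\hat H_0+\hat V_v$ has finite norm $\norm{\hat H_v}_2<\infty$. Setting $L\isDefinedAs\norm{\hat H_v}_2$ then yields $\abs{E_v[\hat{\r}_1]-E_v[\hat{\r}_2]}\leq L\,\norm{\hat{\r}_1-\hat{\r}_2}_2$ for all $\hat{\r}_1,\hat{\r}_2\in\NdensMat$, which is the claim.

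I expect the main point to be bookkeeping rather than genuine analysis: the only substantive ingredient is the finiteness of $\norm{\hat H_v}_2$, guaranteed by the finite-basis hypothesis. I would, however, flag explicitly that $L$ depends on the (fixed) potential $v$; this is harmless for the intended use, since the subsequent compactness argument only requires continuity of $E_v$ for each individual $v$, and a uniform-in-$v$ constant is never needed.
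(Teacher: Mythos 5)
Your proof is correct and follows the same basic strategy as the paper: exploit linearity of $E_v$, bound the trace of a product by a product of norms, and get finiteness of the constant from the finite basis. The only difference is the choice of trace inequality, and it is worth noting: the paper takes $\norm{\hat H_v}_\infty$ as the Lipschitz constant, writing $\abs{E_v[\hat{\r}_0]-E_v[\hat{\r}_1]} \leq \norm{\hat H_v}_\infty \norm{\hat{\r}_0-\hat{\r}_1}$, which is H\"older's inequality and is valid when the norm on the density-matrix difference is the \emph{trace} norm. Read literally against the norm $\norm{\cdot}_2$ that the paper actually endows $\NdensMat$ with, that displayed bound needs an extra dimension-dependent factor: for example $\hat H_v = \hat{\r}_0-\hat{\r}_1 = \mathrm{diag}(1,-1)$ gives $\Trace\{\hat H_v(\hat{\r}_0-\hat{\r}_1)\}=2$ while $\norm{\hat H_v}_\infty\norm{\hat{\r}_0-\hat{\r}_1}_2=\sqrt{2}$. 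Your Cauchy--Schwarz variant with constant $L=\norm{\hat H_v}_2$ avoids this mismatch entirely, since it is phrased directly in the Hilbert--Schmidt inner product whose norm is the one defined on $\NdensMat$. Either constant serves the intended purpose --- the subsequent compactness argument only needs continuity of $E_v$ for each fixed $v$, and in finite dimension the two bounds differ only by norm-equivalence factors --- but your version is the one that is literally consistent with the paper's own choice of norm.
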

\begin{proof}
The Hamiltonian acts on a finite dimensional space and thus it has a maximum eigenvalue, $\norm{\hat H_v}_\infty < \infty$. For two density-matrix operators $\hat{\r}_0, \hat{\r}_1$ we have
\begin{align*}
    \abs{E_v[\hat{\r}_0]-E_v[\hat{\r}_1]}
    &\leq \norm{\hat H_v}_\infty \norm{\hat{\r}_0- \hat{\r}_1}. \qedhere
\end{align*}
\end{proof}
\begin{proposition}
The entropy is continuous on $\NdensMat$.
\end{proposition}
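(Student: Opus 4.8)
The plan is to write the entropy via functional calculus as $S[\hat{\r}] = \Trace\{s(\hat{\r})\} = \sum_{k=1}^D s(\lambda_k)$, where $s(x)\isDefinedAs -x\log x$ with the continuous extension $s(0)=0$, where $\lambda_1 \geq \dotsb \geq \lambda_D$ denote the sorted eigenvalues of $\hat{\r}$, and where $D=\dim \oneH^N_\pm < \infty$. Because $\hat{\r}\geq 0$ and $\Trace\{\hat{\r}\}=1$, every eigenvalue lies in the compact interval $[0,1]$, on which $s$ is continuous, hence uniformly continuous and bounded. The continuity of $S$ will then be reduced to the continuity of the eigenvalues in the matrix entries together with the (uniform) continuity of $s$.

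First I would record that the sorted eigenvalue maps $\hat{\r}\mapsto \lambda_k(\hat{\r})$ are continuous on $\HermitanMat(\Nbas)$, and in fact on the $N$-particle operators. This is Weyl's perturbation inequality: for Hermitian $\hat{\r}_0,\hat{\r}_1$ one has $\max_k \abs{\lambda_k(\hat{\r}_0)-\lambda_k(\hat{\r}_1)} \leq \norm{\hat{\r}_0-\hat{\r}_1}_\infty \leq \norm{\hat{\r}_0-\hat{\r}_1}_2$, the last step because in finite dimension the operator norm is dominated by the Hilbert--Schmidt norm \eqref{defintion norm for density-matrix operators}. Thus each $\lambda_k$ is even Lipschitz continuous in the relevant norm.

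I would then conclude that $S[\hat{\r}]=\sum_{k=1}^D s(\lambda_k(\hat{\r}))$ is a finite sum of compositions of continuous functions, hence continuous. Explicitly, $\abs{S[\hat{\r}_0]-S[\hat{\r}_1]} \leq \sum_{k=1}^D \abs{s(\lambda_k(\hat{\r}_0))-s(\lambda_k(\hat{\r}_1))}$, and uniform continuity of $s$ on $[0,1]$ makes each summand small once $\norm{\hat{\r}_0-\hat{\r}_1}_2$, and therefore each $\abs{\lambda_k(\hat{\r}_0)-\lambda_k(\hat{\r}_1)}$, is small. An equivalent route that sidesteps any discussion of eigenvalue ordering is polynomial approximation: by the Weierstrass theorem choose polynomials $p_n \to s$ uniformly on $[0,1]$; each map $\hat{\r}\mapsto \Trace\{p_n(\hat{\r})\}$ is manifestly continuous (a polynomial in the matrix entries), and $\abs{\Trace\{p_n(\hat{\r})\}-S[\hat{\r}]} \leq D\,\norm{p_n-s}_\infty$ uniformly in $\hat{\r}$, so $S$ is a uniform limit of continuous functions and hence continuous.

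The main obstacle is confined to the boundary of the spectrum: unlike the energy, the entropy is \emph{not} Lipschitz on $\NdensMat$, since $s'(x)=-\log x - 1 \to +\infty$ as $x\to 0^+$, so eigenvalues approaching $0$ rule out any uniform Lipschitz constant. This is precisely why only continuity is claimed. The difficulty is resolved by exploiting compactness of $[0,1]$, where $s$ is uniformly continuous; this upgrades the merely continuous (indeed Lipschitz) dependence of the eigenvalues to continuity of the full trace, without ever needing a derivative bound near the boundary.
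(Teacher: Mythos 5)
Your proof is correct and follows essentially the same route as the paper: continuity of the sorted eigenvalue map composed with continuity of $(\lambda_1,\dotsc,\lambda_D)\mapsto\sum_k s(\lambda_k)$ on $[0,1]^D$. You merely supply details the paper leaves implicit (Weyl's perturbation inequality, the convention $s(0)=0$, uniform continuity on $[0,1]$) plus an optional Weierstrass-approximation variant, all of which are sound.
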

\begin{proof}
The function $\hat{\r} \mapsto (a_1, \dotsc , a_m)$ where $(a_1, \dotsc, a_m)$ contains the eigenvalues of $\hat{\r}$ in an ordered fashion and with multiplicity ($a_i \leq a_{i+1}$ for all $i=1, \dotsc, m-1$) is continuous. The statement follows from continuity of the function $(a_1, \dotsc , a_m) \mapsto \sum_{j=1}^m a_j \log(a_j)$. 
\end{proof}
To finish the proof we make use of the following theorem.
\begin{thm}
Let $X$ be a compact metric space and let $f:X \to \bR$ be a continuous function. Then $f$ is bounded and it attains its maximum and minimum.
\end{thm}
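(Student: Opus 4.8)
The statement is the Extreme Value Theorem for compact metric spaces, so the plan is to invoke standard point-set topology rather than anything bespoke. I would sketch two equivalent routes and favor the shorter, image-based one.

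The first approach rests on the fact that the continuous image of a compact set is compact. Since $f$ is continuous and $X$ is compact, $f(X) \subseteq \bR$ is compact, hence closed and bounded by Heine--Borel. Boundedness of $f(X)$ immediately gives boundedness of $f$. Because $f(X)$ is closed and bounded and (assuming $X$ nonempty) nonempty, the quantities $M \isDefinedAs \sup f(X)$ and $m \isDefinedAs \inf f(X)$ are finite; being limit points of $f(X)$, they lie in $f(X)$ precisely because it is closed. Thus there exist $x_{\max}, x_{\min} \in X$ with $f(x_{\max}) = M$ and $f(x_{\min}) = m$, which is exactly attainment of the maximum and minimum.

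As an alternative I would give the sequential argument, which is perhaps more transparent in the metric setting since compactness is equivalent to sequential compactness. For boundedness, were $f$ unbounded above one could choose $x_n$ with $f(x_n) > n$; extracting a convergent subsequence $x_{n_k} \to x$ and using continuity would force $f(x_{n_k}) \to f(x) < \infty$, contradicting $f(x_{n_k}) > n_k$. For attainment of $M \isDefinedAs \sup_X f$ (finite by the previous step), pick $x_n$ with $f(x_n) \to M$, extract $x_{n_k} \to x^*$, and conclude $f(x^*) = M$ by continuity. The minimum then follows by applying the same reasoning to $-f$, which is also continuous.

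There is no genuine obstacle here. The only ingredients are the preservation of compactness under continuous maps together with Heine--Borel (first route), or the equivalence of compactness with sequential compactness in metric spaces (second route), both entirely standard. The one point worth stating with care is that a nonempty compact subset of $\bR$ contains its supremum and infimum; this is where closedness of $f(X)$ is used and is the crux of why attainment (and not merely boundedness) holds.
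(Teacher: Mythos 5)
Your proposal is correct, and there is nothing to compare it against: the paper states this theorem (the Extreme Value Theorem) without proof, treating it as a standard fact from point-set topology, which it is. Both of your routes --- the image argument (continuous image of a compact set is compact, then Heine--Borel plus closedness gives attainment of $\sup$ and $\inf$) and the sequential-compactness argument --- are standard, complete, and valid; your remark that nonemptiness of $X$ is needed for attainment, and that closedness of $f(X)$ is the crux of attainment rather than mere boundedness, is exactly the right point to flag.
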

Compact sets in finite dimensional affine spaces (with the usual metric) are fully characterized by closedness and boundedness. Thus, $\NdensMat$ and $\set{\hat{\r} \in \NdensMat | \hat{\r} \to \g }$ are compact and we get the following corollary.
\begin{Cor}
The infima in the the Helmholtz functional $\HelmholtzV[v]$ and the universal functional $\uniF[\g]$ are achieved in the fermionic and bosonic case, so the infima in \eqref{definition Helmholtz functional} and~\eqref{infimum of the Helmholtz funcitonal over 1RDM } can be replaced by minima.
\end{Cor}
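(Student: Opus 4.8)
The plan is to invoke the extreme value theorem stated just above twice: once for $\Helmholtz_v$ over $\NdensMat$ to handle $\HelmholtzV[v]$, and once for $\Helmholtz_0$ over the fibre $\set{\hat{\r}\in\NdensMat|\hat{\r}\to\g}$ to handle $\uniF[\g]$. In each case all that is needed is to certify that a continuous function is minimized over a nonempty compact set. Continuity of the objective is immediate: writing $\Helmholtz_v[\hat{\r}]=E_v[\hat{\r}]-\b^{-1}S[\hat{\r}]$, the two preceding propositions give that $E_v$ is Lipschitz continuous and $S$ is continuous on $\NdensMat$ in the norm $\norm{\cdot}_2$, so their combination $\Helmholtz_v$ (and in particular $\Helmholtz_0$) is continuous.

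Next I would check that $\NdensMat$ is compact. Inside the finite-dimensional real space $\HermitanMat(\Nbas)$ it is cut out by $\hat{\r}=\hat{\r}^\dagger$, $\hat{\r}\geq0$ (a closed cone) and $\Trace\{\hat{\r}\}=1$ (a closed affine hyperplane), hence it is closed; and $\hat{\r}\geq0$ together with unit trace forces each eigenvalue into $[0,1]$, giving $\norm{\hat{\r}}_2\leq1$ and hence boundedness. By the Heine--Borel characterization recalled in the text it is compact, and the extreme value theorem produces a minimizer, so the infimum in \eqref{infimum of the Helmholtz funcitonal over 1RDM } is attained for both statistics.

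For $\uniF[\g]$ the only new ingredient is the fibre $\set{\hat{\r}\in\NdensMat|\hat{\r}\to\g}$. Since the 1RDM map $\hat{\r}\mapsto\g[\hat{\r}]$, with entries $\g_{ij}[\hat{\r}]=\Trace\{\hat{\r}\,\crea{a}_j\anni{a}_i\}$, is linear and therefore continuous, this fibre is the intersection of the compact set $\NdensMat$ with the closed preimage of the single point $\g$, hence compact. For $\g\in\NoneMat$ it is moreover nonempty by Coleman's Theorem~\ref{thm N representability of 1RDMs}, so the extreme value theorem again yields a minimizer and the infimum in \eqref{defintion universal functional} is attained; for $\g\notin\NoneMat$ the fibre is empty and $\uniF[\g]=\infty$ by convention, so nothing is claimed.

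The existence-of-minimum argument itself is routine once continuity and compactness are in place, so the only genuine obstacle is controlling the fibre set: its nonemptiness is exactly the $N$-representability content supplied by Coleman's theorem, while its closedness rests on the linearity (and hence continuity) of the 1RDM map $\hat{\r}\mapsto\g[\hat{\r}]$. With those two facts secured the corollary follows at once.
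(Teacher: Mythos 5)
Your proposal is correct and follows essentially the same route as the paper: the two continuity propositions plus Heine--Borel compactness of $\NdensMat$ and of the fibre $\set{\hat{\r} \in \NdensMat | \hat{\r} \to \g}$, followed by the extreme value theorem; your write-up merely makes explicit the closedness/boundedness of $\NdensMat$, the closedness of the fibre via linearity of $\hat{\r}\mapsto\g[\hat{\r}]$, and the nonemptiness via Coleman's theorem, all of which the paper asserts without elaboration. One cosmetic slip: $\NdensMat$ consists of operators on the $N$-particle space $\oneH_\pm^N$, so the ambient finite-dimensional space is the set of Hermitian operators on $\oneH_\pm^N$ (of dimension $\binom{\Nbas+N-1}{N}$ or $\binom{\Nbas}{N}$), not $\HermitanMat(\Nbas)$; this does not affect the argument.
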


\section{Conclusion}
For a fixed number of particles, finite numbers of basis functions and elevated temperature the universal functional in 1RDM functional theory is differentiable with $\partial \uniF/ \partial \g=-v$ for all 1RDMs $\g$ in $\ToneMat$. This relation holds for potential with $\trace\{v\}=0$. However,  all potentials differing from $v$ by only a constant lead to the same 1RDM. From this relation it follows directly that the map $v \mapsto \g$ is bijective up to a constant in the potential and it gives a characterization of the set of $v$-representable 1RDMs. Additionally, for every potential the Helmholtz functional and the universal functional attain a minimum. One might think of reaching the $T=0$ case by taking the limit $T \to 0$. The Gibbs state will just be an equi-ensemble of the ground states, but a difficulty is that it changes discontinuously when the potential is varied such that other states become ground states. Additionally, we can no longer guarantee that the 1RDM will be in the (relative) interior of $\NoneMat$ which prevents invertibility of $v \mapsto \g_v$ and also differentiability will probably no longer be in the cards.

\begin{acknowledgments}
The authors acknowledge support by the Netherlands Organisation for Scientific Research (NWO) under Vici grant 724.017.001.
\end{acknowledgments}

\appendix
\section{Properties of the 1RDM}
We are still left with showing that the set of 1RDMs is $\NoneMat$. To show the properties \eqref{definition spaces for 1RDM (bosons)} and \eqref{definition spaces for 1RDM (fermions)} for the 1RDM we use the following equivalent definition.

The kernel of $\g$ is given by tracing out $N-1$ particles in the density matrix operator,
\begin{equation}
    \label{definition 1RDM kernel}
        \g(x,y)= N \int \ud x_2 \dotsi \ud x_N \r(x, x_2, \dotsc , x_N;y, x_2, \dotsc, x_N).
\end{equation}
The 1RDM can be worked out in a $1$-particle orthonormal basis $\phi_1, \dotsc, \phi_{\Nbas}$ for a matrix representation with elements
\begin{multline}
    \label{1RDM in matrix representation}
        \g_{ij}=N \int \ud x \ud y \ud x_2 \dotsi \ud x_N \phi_i^*(x) \phi_j(y) \\* \r(x, x_2, \ldots, x_N; y, x_2, \ldots, x_N).
\end{multline}
Note that we have the following inequality,
\begin{multline}
    \label{}
        \g_{ii}= N \sum_l \l_l  \int \ud x_2 \dotsi \ud x_N \\
        \biggl( \int \ud x \phi_i^*(x) \psi_l(x, x_2, \dotsc, x_N) \biggr)^2 
        \geq 0.
\end{multline}
Next we want to show that for the fermionic 1RDM the diagonal entries are bounded from above by 1. For this, note that the entries of a statistical ensemble are bounded by the maximum value of the eigenstates of the corresponding density-matrix operator, i.e., 
\begin{align}
    \label{}
        \g_{ij} &=N \sum_{l} \l_l \int \ud x \ud y \ud x_2 \dotsi \ud x_N \phi_i^*(x) \phi_j(y) \notag \\*
        &\qquad \psi_l(x, x_2, \dotsc, x_N) \psi_l^*(y, x_2, \dotsc, x_N) \notag \\
        &\leq \max_l N \int \ud x \ud y \ud x_2 \dotsi \ud x_N \phi_i^*(x) \phi_j(y) \notag \\*
        &\qquad\psi_l(x, x_2, \dotsc, x_N) \psi_l^*(y, x_2, \dotsc, x_N) \notag \\
        &= \max \limits_l (\g_{\psi_l})_{ij} ,
\end{align}
where $\g_{\psi_l}$ means the 1RDM generated from the wave function $\psi_l$. Thus, we need to show the desired upper bound only for pure states. With a similar argument, it suffices to show the bound for Slater determinants built from $1$-particle orthonormal states $f_1, \dotsc, f_N$, %build from the $1$-particle basis $\phi_1, \ldots, \phi_{\Nbas}$
\begin{multline}
    \label{Slater determinant}
        \psi(x_1, \dotsc, x_N) \\
        = \frac{1}{\sqrt{N!}}\sum_{\s \in S_N} (-1)^{\sgn(\s)} f_{\s(1)}(x_1)\dotsb f_{\s(N)}(x_N).
\end{multline}
%where we have use $\phi_1, \ldots, \phi_N$ for simplicity. 
The kernel of the 1RDM can be worked out as
\begin{equation}
    \label{kernel of 1RDM of a Slater determinant}
        \g_\psi(x,y)=\sum \limits_{j=1}^N f_j(x) f_j^*(y)
\end{equation}
and its diagonal elements are
\begin{equation}
    \label{diagonal element of 1RDM of a Slater determinant}
        (\g_\psi)_{ii}=\sum \limits_{j=1}^N |\langle f_j| \phi_i \rangle|^2\leq \langle \phi_i| \phi_i \rangle =1.
\end{equation}
It is easy to see that the 1RDM is Hermitian. Thus, it has a spectral decomposition
\begin{equation}
    \label{spectral decomposition of 1RDM}
        \g=\sum_{l=1}^{\Nbas}\l_l \ket{\varphi_l}\bra{\varphi_l}.
\end{equation} 
Note that since the diagonal elements are non-negative for any basis, it follows that the eigenvalues $\l_l$ are non-negative. Therefore, $\g \geq 0$ and for fermions we have additionally $\g \leq 1$.
The trace of $\g$ can be calculated through its integral kernel \eqref{definition 1RDM kernel},
\begin{align}
    \label{trace of 1RDM}
        \trace\{\g\} &= \int \ud x \g(x,x) \notag \\
        &= N \sum_l \l_l \int \ud x\ud x_2 \dotsi \ud x_N\abs{\psi_l(x,x_2, \dotsc x_N)}^2 \notag \\
        &= N.
\end{align}
All these properties together show that the set of 1RDMs is contained in $\NoneMat$.

Next, we want to prove Theorem~\ref{thm N representability of 1RDMs}.
\begin{proof}
For $N=1$ we can simply take $\hat{\r}=\g$. So let us consider the case $N\geq2$. We represent $\g$ in the NO basis $\g = \sum_{j=1}^{\Nbas} \l_j \ket{\varphi_j}\bra{\varphi_j}$. We need to distinguish between bosons and fermions.
\begin{description}
\item[Bosonic case] We define the $N$-particle wave function 
\begin{equation*}
    \psi(x_1, \ldots, x_N)\isDefinedAs \frac{1}{\sqrt{N}}\sum_{j=1}^{\Nbas} \l_j^{1/2} \prod_{i=1}^N \varphi_j(x_i).
\end{equation*}
It is now easy to see that $\psi$ is symmetric, normalized and that it generates $\g$.

\item[Fermionic case] We work with a polytope. The 1RDM $\g$ can be expressed as a vector of length $\Nbas$ containing its occupation numbers $\textbf{n}=( \l_1, \dotsc, \l_{\Nbas})$. The extreme points of the polytope are all possible permutations of $N$ occupation numbers set to one and all other set to zero
\begin{equation*}
    \overline{\g}_I\isDefinedAs \overline{\g}_{i_1 \ldots i_N}\isDefinedAs \textbf{e}_{i_1}+ \dotsb + \textbf{e}_{i_N}, 
\end{equation*}
for $1 \leq i_1 < \ldots < i_N \leq \Nbas$ and where the $\textbf{e}_i$'s are unit vectors. The index $I$ is a renumeration of $i_1 \ldots i_N$ and can take $K= \binom{\Nbas}{N}$ values. The vector $\textbf{n}$ is an element of the polytope
\begin{equation*}
    \G \isDefinedAs\Set{ \sum_{I=1}^K \mu_I \overline{\g}_I | \mu_I \geq 0, \; \sum \limits_{I=1}^K \mu_I=1 }.
\end{equation*}
The extreme points $\overline{\g}_{i_1 \dotsc i_N}$ can now be identified with $\ket{\varphi_{i_1} \dotsc \varphi_{i_N}}\bra{\varphi_{i_1} \dotsc \varphi_{i_N}}$. Since the mapping $\hat{\r}\to \g$ is linear we find that $\g$ is generated from a linear combination of the Slater determinants $\ket{\varphi_{i_1} \dotsc \varphi_{i_N}}\bra{\varphi_{i_1} \dotsc \varphi_{i_N}}$. \qedhere
\end{description}
\end{proof}

\bibliography{bibliography}

%apsrev4-2.bst 2019-01-14 (MD) hand-edited version of apsrev4-1.bst
%Control: key (0)
%Control: author (8) initials jnrlst
%Control: editor formatted (1) identically to author
%Control: production of article title (0) allowed
%Control: page (0) single
%Control: year (1) truncated
%Control: production of eprint (0) enabled
\begin{thebibliography}{31}%
\makeatletter
\providecommand \@ifxundefined [1]{%
 \@ifx{#1\undefined}
}%
\providecommand \@ifnum [1]{%
 \ifnum #1\expandafter \@firstoftwo
 \else \expandafter \@secondoftwo
 \fi
}%
\providecommand \@ifx [1]{%
 \ifx #1\expandafter \@firstoftwo
 \else \expandafter \@secondoftwo
 \fi
}%
\providecommand \natexlab [1]{#1}%
\providecommand \enquote  [1]{``#1''}%
\providecommand \bibnamefont  [1]{#1}%
\providecommand \bibfnamefont [1]{#1}%
\providecommand \citenamefont [1]{#1}%
\providecommand \href@noop [0]{\@secondoftwo}%
\providecommand \href [0]{\begingroup \@sanitize@url \@href}%
\providecommand \@href[1]{\@@startlink{#1}\@@href}%
\providecommand \@@href[1]{\endgroup#1\@@endlink}%
\providecommand \@sanitize@url [0]{\catcode `\\12\catcode `\$12\catcode
  `\&12\catcode `\#12\catcode `\^12\catcode `\_12\catcode `\%12\relax}%
\providecommand \@@startlink[1]{}%
\providecommand \@@endlink[0]{}%
\providecommand \url  [0]{\begingroup\@sanitize@url \@url }%
\providecommand \@url [1]{\endgroup\@href {#1}{\urlprefix }}%
\providecommand \urlprefix  [0]{URL }%
\providecommand \Eprint [0]{\href }%
\providecommand \doibase [0]{https://doi.org/}%
\providecommand \selectlanguage [0]{\@gobble}%
\providecommand \bibinfo  [0]{\@secondoftwo}%
\providecommand \bibfield  [0]{\@secondoftwo}%
\providecommand \translation [1]{[#1]}%
\providecommand \BibitemOpen [0]{}%
\providecommand \bibitemStop [0]{}%
\providecommand \bibitemNoStop [0]{.\EOS\space}%
\providecommand \EOS [0]{\spacefactor3000\relax}%
\providecommand \BibitemShut  [1]{\csname bibitem#1\endcsname}%
\let\auto@bib@innerbib\@empty
%</preamble>
\bibitem [{\citenamefont {Hohenberg}\ and\ \citenamefont
  {Kohn}(1964)}]{hohenberg1964inhomogeneous}%
  \BibitemOpen
  \bibfield  {author} {\bibinfo {author} {\bibfnamefont {P.}~\bibnamefont
  {Hohenberg}}\ and\ \bibinfo {author} {\bibfnamefont {W.}~\bibnamefont
  {Kohn}},\ }\bibfield  {title} {\bibinfo {title} {Inhomogeneous electron
  gas},\ }\href {https://doi.org/10.1103/PhysRev.136.B864} {\bibfield
  {journal} {\bibinfo  {journal} {Physical review}\ }\textbf {\bibinfo {volume}
  {136}},\ \bibinfo {pages} {B864} (\bibinfo {year} {1964})}\BibitemShut
  {NoStop}%
\bibitem [{\citenamefont {Kohn}\ and\ \citenamefont
  {Sham}(1965)}]{kohn-sham1965}%
  \BibitemOpen
  \bibfield  {author} {\bibinfo {author} {\bibfnamefont {W.}~\bibnamefont
  {Kohn}}\ and\ \bibinfo {author} {\bibfnamefont {L.~J.}\ \bibnamefont
  {Sham}},\ }\bibfield  {title} {\bibinfo {title} {Self-consistent equations
  including exchange and correlation effects},\ }\href
  {https://doi.org/10.1103/PhysRev.140.A1133} {\bibfield  {journal} {\bibinfo
  {journal} {Physical review}\ }\textbf {\bibinfo {volume} {140}},\ \bibinfo
  {pages} {A1133} (\bibinfo {year} {1965})}\BibitemShut {NoStop}%
\bibitem [{\citenamefont {Cohen}\ \emph {et~al.}(2012)\citenamefont {Cohen},
  \citenamefont {Mori-Sánchez},\ and\ \citenamefont
  {Yang}}]{cohen2012challenges}%
  \BibitemOpen
  \bibfield  {author} {\bibinfo {author} {\bibfnamefont {A.~J.}\ \bibnamefont
  {Cohen}}, \bibinfo {author} {\bibfnamefont {P.}~\bibnamefont
  {Mori-Sánchez}},\ and\ \bibinfo {author} {\bibfnamefont {W.}~\bibnamefont
  {Yang}},\ }\bibfield  {title} {\bibinfo {title} {Challenges for density
  functional theory},\ }\href {https://doi.org/10.1021/cr200107z} {\bibfield
  {journal} {\bibinfo  {journal} {Chemical reviews}\ }\textbf {\bibinfo
  {volume} {112}},\ \bibinfo {pages} {289} (\bibinfo {year}
  {2012})}\BibitemShut {NoStop}%
\bibitem [{\citenamefont {Becke}(2014)}]{becke2014j}%
  \BibitemOpen
  \bibfield  {author} {\bibinfo {author} {\bibfnamefont {A.}~\bibnamefont
  {Becke}},\ }\bibfield  {title} {\bibinfo {title} {Perspective: Fifty years of
  density-functional theory in chemical physics},\ }\href
  {https://doi.org/10.1063/1.4869598} {\bibfield  {journal} {\bibinfo
  {journal} {J. Chem. Phys.}\ }\textbf {\bibinfo {volume} {140}},\ \bibinfo
  {pages} {18A301} (\bibinfo {year} {2014})}\BibitemShut {NoStop}%
\bibitem [{\citenamefont {Vuckovic}\ \emph {et~al.}(2015)\citenamefont
  {Vuckovic}, \citenamefont {Wagner}, \citenamefont {Mirtschink},\ and\
  \citenamefont {Gori-Giorgi}}]{vuckovic2015hydrogen}%
  \BibitemOpen
  \bibfield  {author} {\bibinfo {author} {\bibfnamefont {S.}~\bibnamefont
  {Vuckovic}}, \bibinfo {author} {\bibfnamefont {L.~O.}\ \bibnamefont
  {Wagner}}, \bibinfo {author} {\bibfnamefont {A.}~\bibnamefont {Mirtschink}},\
  and\ \bibinfo {author} {\bibfnamefont {P.}~\bibnamefont {Gori-Giorgi}},\
  }\bibfield  {title} {\bibinfo {title} {Hydrogen molecule dissociation curve
  with functionals based on the strictly correlated regime},\ }\href
  {https://doi.org/10.1021/acs.jctc.5b00387} {\bibfield  {journal} {\bibinfo
  {journal} {J. Chem. Theory Comput.}\ }\textbf {\bibinfo {volume} {11}},\
  \bibinfo {pages} {3153} (\bibinfo {year} {2015})}\BibitemShut {NoStop}%
\bibitem [{\citenamefont {Gilbert}(1975)}]{gilbert1975hohenberg}%
  \BibitemOpen
  \bibfield  {author} {\bibinfo {author} {\bibfnamefont {T.~L.}\ \bibnamefont
  {Gilbert}},\ }\bibfield  {title} {\bibinfo {title} {Hohenberg-kohn theorem
  for nonlocal external potentials},\ }\href
  {https://doi.org/10.1103/PhysRevB.12.2111} {\bibfield  {journal} {\bibinfo
  {journal} {Physical Review B}\ }\textbf {\bibinfo {volume} {12}},\ \bibinfo
  {pages} {2111} (\bibinfo {year} {1975})}\BibitemShut {NoStop}%
\bibitem [{\citenamefont {Pernal}(2005)}]{pernal2005}%
  \BibitemOpen
  \bibfield  {author} {\bibinfo {author} {\bibfnamefont {K.}~\bibnamefont
  {Pernal}},\ }\bibfield  {title} {\bibinfo {title} {Effective potential for
  natural spin orbitals},\ }\href
  {https://doi.org/10.1103/PhysRevLett.94.233002} {\bibfield  {journal}
  {\bibinfo  {journal} {Phys. Rev. Lett.}\ }\textbf {\bibinfo {volume} {94}},\
  \bibinfo {pages} {233002} (\bibinfo {year} {2005})}\BibitemShut {NoStop}%
\bibitem [{\citenamefont {van Leeuwen}(2007)}]{Leeuwen2007}%
  \BibitemOpen
  \bibfield  {author} {\bibinfo {author} {\bibfnamefont {R.}~\bibnamefont {van
  Leeuwen}},\ }\bibfield  {title} {\bibinfo {title} {Functionals of the
  one-particle density matrix: {U}niqueness theorems and {$v$}-representability
  in equilibrium and time-dependent systems}} (\bibinfo {year}
  {2007})\BibitemShut {NoStop}%
\bibitem [{\citenamefont {Baldsiefen}(2012)}]{PhD-Baldsiefen2012}%
  \BibitemOpen
  \bibfield  {author} {\bibinfo {author} {\bibfnamefont {T.}~\bibnamefont
  {Baldsiefen}},\ }\emph {\bibinfo {title} {Reduced Density Matrix Functional
  Theory at Finite Temperature}},\ \href@noop {} {Ph.D. thesis},\ \bibinfo
  {school} {Institut für Theoretische Physik Freie Universität Berlin}
  (\bibinfo {year} {2012})\BibitemShut {NoStop}%
\bibitem [{\citenamefont {Giesbertz}\ and\ \citenamefont
  {Ruggenthaler}(2019)}]{giesbertz2019one}%
  \BibitemOpen
  \bibfield  {author} {\bibinfo {author} {\bibfnamefont {K.~J.}\ \bibnamefont
  {Giesbertz}}\ and\ \bibinfo {author} {\bibfnamefont {M.}~\bibnamefont
  {Ruggenthaler}},\ }\bibfield  {title} {\bibinfo {title} {One-body reduced
  density-matrix functional theory in finite basis sets at elevated
  temperatures},\ }\href {https://doi.org/10.1016/j.physrep.2019.01.010}
  {\bibfield  {journal} {\bibinfo  {journal} {Physics Reports}\ }\textbf
  {\bibinfo {volume} {806}},\ \bibinfo {pages} {1} (\bibinfo {year}
  {2019})}\BibitemShut {NoStop}%
\bibitem [{\citenamefont {Requist}\ and\ \citenamefont
  {Pankratov}(2008)}]{RequistPankratov2008}%
  \BibitemOpen
  \bibfield  {author} {\bibinfo {author} {\bibfnamefont {R.}~\bibnamefont
  {Requist}}\ and\ \bibinfo {author} {\bibfnamefont {O.}~\bibnamefont
  {Pankratov}},\ }\bibfield  {title} {\bibinfo {title} {Generalized
  {K}ohn-{S}ham system in one-matrix functional theory},\ }\href
  {https://doi.org/10.1103/PhysRevB.77.235121} {\bibfield  {journal} {\bibinfo
  {journal} {Phys. Rev. B}\ }\textbf {\bibinfo {volume} {77}},\ \bibinfo
  {pages} {235121} (\bibinfo {year} {2008})}\BibitemShut {NoStop}%
\bibitem [{\citenamefont {Giesbertz}\ and\ \citenamefont
  {Baerends}(2010)}]{GiesbertzBaerends2010}%
  \BibitemOpen
  \bibfield  {author} {\bibinfo {author} {\bibfnamefont {K.~J.~H.}\
  \bibnamefont {Giesbertz}}\ and\ \bibinfo {author} {\bibfnamefont {E.~J.}\
  \bibnamefont {Baerends}},\ }\bibfield  {title} {\bibinfo {title} {Aufbau
  derived from a unified treatment of occupation numbers in {H}artree--{F}ock,
  {K}ohn--{S}ham, and natural orbital theories with the
  {K}arush--{K}uhn--{T}ucker conditions for the inequality constraints {$n_i
  \leq 1$} and {$n_i \geq 0$}},\ }\href {https://doi.org/10.1063/1.3426319}
  {\bibfield  {journal} {\bibinfo  {journal} {J. Chem. Phys.}\ }\textbf
  {\bibinfo {volume} {132}},\ \bibinfo {pages} {194108} (\bibinfo {year}
  {2010})}\BibitemShut {NoStop}%
\bibitem [{\citenamefont {Baldsiefen}\ and\ \citenamefont
  {Gross}(2013)}]{baldsiefen2013minimization}%
  \BibitemOpen
  \bibfield  {author} {\bibinfo {author} {\bibfnamefont {T.}~\bibnamefont
  {Baldsiefen}}\ and\ \bibinfo {author} {\bibfnamefont {E.}~\bibnamefont
  {Gross}},\ }\bibfield  {title} {\bibinfo {title} {Minimization procedure in
  reduced density matrix functional theory by means of an effective
  noninteracting system},\ }\href
  {https://doi.org/10.1016/j.comptc.2012.09.001} {\bibfield  {journal}
  {\bibinfo  {journal} {Computational and Theoretical Chemistry}\ }\textbf
  {\bibinfo {volume} {1003}},\ \bibinfo {pages} {114} (\bibinfo {year}
  {2013})}\BibitemShut {NoStop}%
\bibitem [{\citenamefont {Baldsiefen}\ \emph {et~al.}(2015)\citenamefont
  {Baldsiefen}, \citenamefont {Cangi},\ and\ \citenamefont
  {Gross}}]{baldsiefen2015reduced}%
  \BibitemOpen
  \bibfield  {author} {\bibinfo {author} {\bibfnamefont {T.}~\bibnamefont
  {Baldsiefen}}, \bibinfo {author} {\bibfnamefont {A.}~\bibnamefont {Cangi}},\
  and\ \bibinfo {author} {\bibfnamefont {E.~K.~U.}\ \bibnamefont {Gross}},\
  }\bibfield  {title} {\bibinfo {title} {Reduced-density-matrix-functional
  theory at finite temperature: Theoretical foundations},\ }\href
  {https://doi.org/10.1103/PhysRevA.92.052514} {\bibfield  {journal} {\bibinfo
  {journal} {Physical Review A}\ }\textbf {\bibinfo {volume} {92}},\ \bibinfo
  {pages} {052514} (\bibinfo {year} {2015})}\BibitemShut {NoStop}%
\bibitem [{\citenamefont {Yoo}\ \emph {et~al.}(2005)\citenamefont {Yoo},
  \citenamefont {Maddox}, \citenamefont {Klepeis}, \citenamefont {Iota},
  \citenamefont {Evans}, \citenamefont {McMahan}, \citenamefont {Hu},
  \citenamefont {Chow}, \citenamefont {Somayazulu}, \citenamefont
  {H{\"a}usermann}, \citenamefont {Scalettar},\ and\ \citenamefont
  {Pickett}}]{YooMaddoxKlepeis2005}%
  \BibitemOpen
  \bibfield  {author} {\bibinfo {author} {\bibfnamefont {C.~S.}\ \bibnamefont
  {Yoo}}, \bibinfo {author} {\bibfnamefont {B.}~\bibnamefont {Maddox}},
  \bibinfo {author} {\bibfnamefont {J.~H.~P.}\ \bibnamefont {Klepeis}},
  \bibinfo {author} {\bibfnamefont {V.}~\bibnamefont {Iota}}, \bibinfo {author}
  {\bibfnamefont {W.}~\bibnamefont {Evans}}, \bibinfo {author} {\bibfnamefont
  {A.}~\bibnamefont {McMahan}}, \bibinfo {author} {\bibfnamefont {M.~Y.}\
  \bibnamefont {Hu}}, \bibinfo {author} {\bibfnamefont {P.}~\bibnamefont
  {Chow}}, \bibinfo {author} {\bibfnamefont {M.}~\bibnamefont {Somayazulu}},
  \bibinfo {author} {\bibfnamefont {D.}~\bibnamefont {H{\"a}usermann}},
  \bibinfo {author} {\bibfnamefont {R.~T.}\ \bibnamefont {Scalettar}},\ and\
  \bibinfo {author} {\bibfnamefont {W.~E.}\ \bibnamefont {Pickett}},\
  }\bibfield  {title} {\bibinfo {title} {First-order isostructural {Mott}
  transition in highly compressed {MnO}},\ }\href
  {https://doi.org/10.1103/PhysRevLett.94.115502} {\bibfield  {journal}
  {\bibinfo  {journal} {Phys. Rev. Lett.}\ }\textbf {\bibinfo {volume} {94}},\
  \bibinfo {pages} {115502} (\bibinfo {year} {2005})}\BibitemShut {NoStop}%
\bibitem [{\citenamefont {Nagamatsu}\ \emph {et~al.}(2001)\citenamefont
  {Nagamatsu}, \citenamefont {Nakagawa}, \citenamefont {Muranaka},
  \citenamefont {Zenitani},\ and\ \citenamefont
  {Akimitsu}}]{NagamatsuNakagawaMuranaka2001}%
  \BibitemOpen
  \bibfield  {author} {\bibinfo {author} {\bibfnamefont {J.}~\bibnamefont
  {Nagamatsu}}, \bibinfo {author} {\bibfnamefont {N.}~\bibnamefont {Nakagawa}},
  \bibinfo {author} {\bibfnamefont {T.}~\bibnamefont {Muranaka}}, \bibinfo
  {author} {\bibfnamefont {Y.}~\bibnamefont {Zenitani}},\ and\ \bibinfo
  {author} {\bibfnamefont {J.}~\bibnamefont {Akimitsu}},\ }\bibfield  {title}
  {\bibinfo {title} {Superconductivity at {39~K} in magnesium diboride},\
  }\href {https://doi.org/10.1038/35065039} {\bibfield  {journal} {\bibinfo
  {journal} {Nature}\ }\textbf {\bibinfo {volume} {410}},\ \bibinfo {pages}
  {63} (\bibinfo {year} {2001})}\BibitemShut {NoStop}%
\bibitem [{\citenamefont {Dharma-wardana}\ and\ \citenamefont
  {Perrot}(1982)}]{Dharma-wardanaPerrot1982}%
  \BibitemOpen
  \bibfield  {author} {\bibinfo {author} {\bibfnamefont {M.~W.~C.}\
  \bibnamefont {Dharma-wardana}}\ and\ \bibinfo {author} {\bibfnamefont
  {F.}~\bibnamefont {Perrot}},\ }\bibfield  {title} {\bibinfo {title}
  {Density-functional theory of hydrogen plasmas},\ }\href
  {https://doi.org/10.1103/PhysRevA.26.2096} {\bibfield  {journal} {\bibinfo
  {journal} {Phys. Rev. A}\ }\textbf {\bibinfo {volume} {26}},\ \bibinfo
  {pages} {2096} (\bibinfo {year} {1982})}\BibitemShut {NoStop}%
\bibitem [{\citenamefont {Ewald}\ \emph {et~al.}(2019)\citenamefont {Ewald},
  \citenamefont {Feldker}, \citenamefont {Hirzler}, \citenamefont {F{\"u}rst},\
  and\ \citenamefont {Gerritsma}}]{EwaldFeldkerHirzler2019}%
  \BibitemOpen
  \bibfield  {author} {\bibinfo {author} {\bibfnamefont {N.~V.}\ \bibnamefont
  {Ewald}}, \bibinfo {author} {\bibfnamefont {T.}~\bibnamefont {Feldker}},
  \bibinfo {author} {\bibfnamefont {H.}~\bibnamefont {Hirzler}}, \bibinfo
  {author} {\bibfnamefont {H.~A.}\ \bibnamefont {F{\"u}rst}},\ and\ \bibinfo
  {author} {\bibfnamefont {R.}~\bibnamefont {Gerritsma}},\ }\bibfield  {title}
  {\bibinfo {title} {Observation of interactions between trapped ions and
  ultracold rydberg atoms},\ }\href
  {https://doi.org/10.1103/PhysRevLett.122.253401} {\bibfield  {journal}
  {\bibinfo  {journal} {Phys. Rev. Lett.}\ }\textbf {\bibinfo {volume} {122}},\
  \bibinfo {pages} {253401} (\bibinfo {year} {2019})}\BibitemShut {NoStop}%
\bibitem [{\citenamefont {Bedingham}(2003)}]{Bedingham2003}%
  \BibitemOpen
  \bibfield  {author} {\bibinfo {author} {\bibfnamefont {D.~J.}\ \bibnamefont
  {Bedingham}},\ }\bibfield  {title} {\bibinfo {title} {Bose-einstein
  condensation in the canonical ensemble},\ }\href
  {https://doi.org/10.1103/PhysRevD.68.105007} {\bibfield  {journal} {\bibinfo
  {journal} {Phys. Rev. D}\ }\textbf {\bibinfo {volume} {68}},\ \bibinfo
  {pages} {105007} (\bibinfo {year} {2003})}\BibitemShut {NoStop}%
\bibitem [{\citenamefont {Mullin}\ and\ \citenamefont
  {Fern{\'a}ndez}(2003)}]{MullinFernandez2003}%
  \BibitemOpen
  \bibfield  {author} {\bibinfo {author} {\bibfnamefont {W.~J.}\ \bibnamefont
  {Mullin}}\ and\ \bibinfo {author} {\bibfnamefont {J.~P.}\ \bibnamefont
  {Fern{\'a}ndez}},\ }\bibfield  {title} {\bibinfo {title} {{Bose--Einstein}
  condensation, fluctuations, and recurrence relations in statistical
  mechanics},\ }\href {https://doi.org/10.1119/1.1544520} {\bibfield  {journal}
  {\bibinfo  {journal} {Am. J. Phys.}\ }\textbf {\bibinfo {volume} {71}},\
  \bibinfo {pages} {661} (\bibinfo {year} {2003})}\BibitemShut {NoStop}%
\bibitem [{\citenamefont {Barghathi}\ \emph {et~al.}(2020)\citenamefont
  {Barghathi}, \citenamefont {Yu},\ and\ \citenamefont
  {Del~Maestro}}]{barghathi2020theory}%
  \BibitemOpen
  \bibfield  {author} {\bibinfo {author} {\bibfnamefont {H.}~\bibnamefont
  {Barghathi}}, \bibinfo {author} {\bibfnamefont {J.}~\bibnamefont {Yu}},\ and\
  \bibinfo {author} {\bibfnamefont {A.}~\bibnamefont {Del~Maestro}},\
  }\bibfield  {title} {\bibinfo {title} {Theory of noninteracting fermions and
  bosons in the canonical ensemble},\ }\href
  {https://doi.org/10.1103/PhysRevResearch.2.043206} {\bibfield  {journal}
  {\bibinfo  {journal} {Physical Review Research}\ }\textbf {\bibinfo {volume}
  {2}},\ \bibinfo {pages} {043206} (\bibinfo {year} {2020})}\BibitemShut
  {NoStop}%
\bibitem [{\citenamefont {Kooi}(2022)}]{kooi2022canonical}%
  \BibitemOpen
  \bibfield  {author} {\bibinfo {author} {\bibfnamefont {D.~P.}\ \bibnamefont
  {Kooi}},\ }\bibfield  {title} {\bibinfo {title} {Efficient bosonic and
  fermionic sinkhorn algorithms for non-interacting ensembles in one-body
  reduced density matrix functional theory in the canonical ensemble},\
  }\Eprint {https://arxiv.org/abs/2205.15058} {arXiv:2205.15058
  [physics.chem-ph]}  (\bibinfo {year} {2022})\BibitemShut {NoStop}%
\bibitem [{Note1()}]{Note1}%
  \BibitemOpen
  \bibinfo {note} {Creation and annihilation operator are only defined pairwise
  for a $N$ particle space.}\BibitemShut {Stop}%
\bibitem [{\citenamefont {Coleman}(1963)}]{coleman1963structure}%
  \BibitemOpen
  \bibfield  {author} {\bibinfo {author} {\bibfnamefont {A.~J.}\ \bibnamefont
  {Coleman}},\ }\bibfield  {title} {\bibinfo {title} {Structure of fermion
  density matrices},\ }\href {https://doi.org/10.1103/RevModPhys.35.668}
  {\bibfield  {journal} {\bibinfo  {journal} {Reviews of modern Physics}\
  }\textbf {\bibinfo {volume} {35}},\ \bibinfo {pages} {668} (\bibinfo {year}
  {1963})}\BibitemShut {NoStop}%
\bibitem [{\citenamefont {Eschrig}(2010)}]{eschrig2010t}%
  \BibitemOpen
  \bibfield  {author} {\bibinfo {author} {\bibfnamefont {H.}~\bibnamefont
  {Eschrig}},\ }\bibfield  {title} {\bibinfo {title} {T$> 0$ ensemble-state
  density functional theory via legendre transform},\ }\href
  {https://doi.org/10.1103/PhysRevB.82.205120} {\bibfield  {journal} {\bibinfo
  {journal} {Physical Review B}\ }\textbf {\bibinfo {volume} {82}},\ \bibinfo
  {pages} {205120} (\bibinfo {year} {2010})}\BibitemShut {NoStop}%
\bibitem [{\citenamefont {Mermin}(1965)}]{mermin1965thermal}%
  \BibitemOpen
  \bibfield  {author} {\bibinfo {author} {\bibfnamefont {N.~D.}\ \bibnamefont
  {Mermin}},\ }\bibfield  {title} {\bibinfo {title} {Thermal properties of the
  inhomogeneous electron gas},\ }\href
  {https://doi.org/10.1103/PhysRev.137.A1441} {\bibfield  {journal} {\bibinfo
  {journal} {Physical Review}\ }\textbf {\bibinfo {volume} {137}},\ \bibinfo
  {pages} {A1441} (\bibinfo {year} {1965})}\BibitemShut {NoStop}%
\bibitem [{\citenamefont {Ruelle}(1969)}]{ruelle1969statistical}%
  \BibitemOpen
  \bibfield  {author} {\bibinfo {author} {\bibfnamefont {D.}~\bibnamefont
  {Ruelle}},\ }\href@noop {} {{\selectlanguage {english}\emph {\bibinfo {title}
  {Statistical Mechanics}}}},\ Mathematical Physics Monograph Series\ (\bibinfo
   {publisher} {W.A. Benjamin, Inc.},\ \bibinfo {year} {1969})\BibitemShut
  {NoStop}%
\bibitem [{\citenamefont {Lieb}(1975)}]{lieb1975convexity}%
  \BibitemOpen
  \bibfield  {author} {\bibinfo {author} {\bibfnamefont {E.~H.}\ \bibnamefont
  {Lieb}},\ }\bibfield  {title} {\bibinfo {title} {Some convexity and
  subadditivity properties of entropy},\ }\href
  {https://doi.org/10.1090/S0002-9904-1975-13621-4} {\bibfield  {journal}
  {\bibinfo  {journal} {Bull. Amer. Math. Soc}\ }\textbf {\bibinfo {volume}
  {81}},\ \bibinfo {pages} {1} (\bibinfo {year} {1975})}\BibitemShut {NoStop}%
\bibitem [{\citenamefont {Wehrl}(1978)}]{wehrl1978general}%
  \BibitemOpen
  \bibfield  {author} {\bibinfo {author} {\bibfnamefont {A.}~\bibnamefont
  {Wehrl}},\ }\bibfield  {title} {\bibinfo {title} {General properties of
  entropy},\ }\href {https://doi.org/10.1103/RevModPhys.50.221} {\bibfield
  {journal} {\bibinfo  {journal} {Reviews of Modern Physics}\ }\textbf
  {\bibinfo {volume} {50}},\ \bibinfo {pages} {221} (\bibinfo {year}
  {1978})}\BibitemShut {NoStop}%
\bibitem [{Note2()}]{Note2}%
  \BibitemOpen
  \bibinfo {note} {This means $M$ is contained in an affine space}\BibitemShut
  {NoStop}%
\bibitem [{Note3()}]{Note3}%
  \BibitemOpen
  \bibinfo {note} {We could have taken a different point in $\protect \mathscr
  {N}_{N, \pm }$, but the current choice is the most natural in this
  setting.}\BibitemShut {Stop}%
\end{thebibliography}%

\end{document}